\documentclass[twoside,11pt,dvipsnames]{entics} 

\def\conf{MFPS 2024}
\volume{NN}

\def\lastname{Sokolova, Woracek}

\usepackage{enticsmacro}
\usepackage{amsmath}
\usepackage{amssymb}
\usepackage{arydshln}
\usepackage{bm}
\usepackage{dsfont}
\usepackage{enumitem}
\usepackage[OT2,OT1]{fontenc}
\usepackage[cal=cm,scr=boondoxo]{mathalfa}
\usepackage{pifont}
\usepackage{scalerel}
\usepackage{stmaryrd}
\usepackage{textcomp}
\usepackage{tikz}
	\usetikzlibrary{arrows}
	\usetikzlibrary{patterns}
	\usetikzlibrary{angles}
\usepackage{tikz-cd}
\usepackage[textwidth=3cm,colorinlistoftodos]{todonotes}

\newcounter{Enum}				
\newenvironment{Enumerate}{\begin{enumerate}[label={\rm({\roman*})}]}{\end{enumerate}}

\newcommand{\descriptionlabelsave}{}		
\newenvironment{Itemize}{%
	\renewcommand{\descriptionlabelsave}{\descriptionlabel}\renewcommand{\descriptionlabel}{$\triangleright$}%
	\begin{description}[leftmargin=15pt,itemindent=-5.2pt]}{%
	\end{description}\renewcommand{\descriptionlabel}{\descriptionlabelsave}}

\newcounter{StepsCount}				
\newenvironment{Steps}{%
	\begin{list}{{\sf Step}\ \ding{\value{StepsCount}}\,:}{%
	\usecounter{StepsCount} \leftmargin=0pt \labelwidth=46pt \itemindent=\labelwidth%
	\itemsep=5pt\listparindent=\parindent} \setcounter{StepsCount}{191}}{\end{list}}

\newcounter{ElistCount}				
\newenvironment{Elist}{%
	\begin{list}{{\rm(\roman{ElistCount})}}{%
	\usecounter{ElistCount} \leftmargin=0pt \labelwidth=0pt \itemindent=6pt%
	\itemsep=5pt\listparindent=\parindent} \setcounter{ElistCount}{0}}{\end{list}}

\newenvironment{Ilist}{
	\begin{list}{$\triangleright$}{\leftmargin=0pt \labelwidth=11pt \itemindent=\labelwidth%
	\itemsep=5pt\listparindent=\parindent}}{\end{list}}

\newcommand{\mc}[1]{{\mathcal{#1}}}			
\newcommand{\bb}[1]{{\mathbb{#1}}}			

\newcommand{\DS}{\mid\mkern3mu}				
\newcommand{\DSb}{\mkern4.5mu\big|\mkern7.5mu}		
\newcommand{\DSB}{\mkern4.5mu\Big|\mkern7.5mu}		
\newcommand{\DQ}{\mkern6mu}				
\newcommand{\DP}{{\mathop:\kern5pt}}			
\newcommand{\DF}{\colon}				
\newcommand{\DE}{\mathrel{\mathop:}=}			
\newcommand{\DI}{\mathrel{\mathop:}\Leftrightarrow}	

\newcommand{\CAS}{&\text{if}\ }				
\newcommand{\CASO}{&\text{otherwise}}			

\DeclareMathOperator{\Id}{id}				

\newcommand{\LC}{{\sf($\overline{\text{LC}}$)}}
\DeclareMathOperator*{\Bigplus}{\scalerel*{+}{\sum}}
\DeclareMathOperator{\Aut}{Aut}				
\newcommand{\Eats}{
	\raisebox{-1pt}{$\mkern3mu
	\begin{tikzpicture}[scale=0.12]
		\draw[thin] (0,0) -- (30:1cm) arc (30:330:1cm) -- cycle;
		\fill (0,0.55) circle (1.5mm);
	\end{tikzpicture}
	\mkern4mu$}
}

\begin{document}

\begin{frontmatter}
	\title{Convex algebras on an interval with semicontinuous monotone operations}
	\author{Ana Sokolova\thanksref{a}\thanksref{myemail}}
	\author{Harald Woracek\thanksref{b}\thanksref{coemail}}
	\address[a]{Department of Computer Science\\ University of Salzburg\\ Salzburg, Austria}
	\thanks[myemail]{Email: \href{mailto:ana.sokolova@cs.uni-salzburg.at} 
		{\texttt{\normalshape ana.sokolova@cs.uni-salzburg.at}}} 
	\address[b]{Institute for Analysis and Scientific Computing\\ TU Wien\\ Vienna, Austria} 
	\thanks[coemail]{Email: \href{mailto:harald.woracek@tuwien.ac.at}
		{\texttt{\normalshape harald.woracek@tuwien.ac.at}}} 
	\begin{abstract} 
		In a recent work of Matteo Mio on compact quantitative equational theories
		(here compact means that all its consequences are derivable by means of finite proofs) 
		convex algebras on the carrier set $[0,1]$ whose operations are monotone and satisfy certain 
		semicontinuity properties occurred. 
		We fully classify those algebraic structures by giving an explicit construction of all possible convex
		operations on $[0,1]$ possessing the mentioned properties. 
		Our result thus describes exactly the range of theories to which Mio's theorem applies.
	\end{abstract}
	\begin{keyword}
		convex algebra, monotonicity, semicontinuity
	\end{keyword}
\end{frontmatter}

\section{Introduction} 

Convex algebras, also called barycentric algebras, semiconvex spaces or convex sets, are the algebras for probabilistic choice. 
They have several different presentations, the most common one being a carrier set together with infinitely many $(0,1)$-indexed binary
operations $\oplus_p$, which represent binary convex combinations. They have been studied for decades in convex geometry, universal
algebra, even economy, and lately extensively in different aspects of semantics for probabilistic systems and 
programs, c.f.~\cite{doberkat:2006,doberkat:2008,jacobs:2010,silva.sokolova:2011,mardare.panangaden.plotkin:2016,Bonchi0S17,BonchiSV19,CirsteaMNS0S25} to name a few,
including our work describing congruences of convex algebras~\cite{sokolova.woracek:pcacon} or 
termination~\cite{sokolova.woracek:CA-star_journal}.

A particularly interesting line of work has been developed in the last decade starting from the seminal paper on quantitative
equational theories~\cite{mardare.panangaden.plotkin:2016} 
by Mardare, Panangaden, and Plotkin, followed by the work on monads (for probability and nondeterminism) on metric 
spaces~\cite{mio.vignudelli:2020,mio.sarkis.vignudelli:2021,mio.sarkis.vignudelli:2022,mio.sarkis.vignudelli:2024} 
and their quantitative algebraic theories. One somewhat troublesome aspect of quantitative equational theories is an infinitary axiom,
which is (in some form) necessary for the soundness and completeness of quantitative
equational theories, yet problematic as it leads to countably branching proofs.

Recently, Mio introduced compact quantitative equational theories \cite{mio:2025} 
that enable proofs without the infinitary axiom. Notably, he proves that the quantitative theory of \emph{interpolative convex
algebras} (ICA) is compact. This theory is given by the axioms for convex algebras and an additional (quantitative) 
axiom known to axiomatize the Kantorovich lifting of a distance on a set $X$ to distributions on $X$.  
In his work, Mio generalizes ICA and the notion of Kantorovich distance to arbitrary convex algebras on $[0,1]$ and shows that
for certain convex algebras on $[0,1]$ --- these are convex algebras with
monotone and (in a certain sense) semicontinuous operations --- the generalized ICA theory is compact. 
Intrigued by Mio's wish to understand such algebras, their properties, and their shape, we started this work. 

In this paper, we give an explicit construction that describes the class of all convex algebras on $[0,1]$ with monotone and (in our
sense) semicontinuous operations. We point out that Mio had already identified crucial examples of such algebras which provide the
building blocks for any other algebra in this class. 

The convex algebras with monotone and semicontinuous operations have an interesting
structure: they are fully determined by a closed subset $E$ of $[0,1]$ which we informally call "eaters", as well as 
an "endpoint" attached to each connected component of 
(i.e., maximal open interval in) the complement $[0,1]\setminus E$ which is either $1$ or $\infty$. An
element $y$ eats an element $x$ if $y \oplus_p x = y$ for one (and hence every) $p\in(0,1)$. 
The eaters are the elements $y$ that eat the zero, and as a consequence of
the operation properties also eat anything in $[0,y)$. The convex combinations of elements in a connected component 
$(a,b)$ of $[0,1]\setminus E$ behave like linear combinations: If the attached endpoint is $1$, isomorphic to the standard $+_p$ on
$(0,1)$; if it is $\infty$, isomorphic to the standard $+_p$ on $(0,\infty)$. 
Note that $(0,1)$ and $(0,\infty)$ with the standard operations are not isomorphic.

The first main result is Theorem~\ref{E15} where we show how to construct a convex algebra on $[0,1]$ with monotone and semicontinuous
operations from given data which specifies the eaters and endpoints to be. Here, the crucial tool is the Plonka sum, going back
to~\cite{plonka:1967}, which is a general method from universal algebra.
Our second main result, is Theorem~\ref{E28} where we show that every convex algebra with monotone and semicontinuous operations is of
this form. This theorem is based on a detailed study of the structure of such algebras. 
Our third main result is Theorem~\ref{E40}. It characterises whether two algebras are isomorphic in terms of their respective sets of
eaters and endpoints.
Together, these three theorems yield a full classification.
Finally, in Proposition~\ref{E26}, we make the connection to Mio's conditions showing that we consider indeed the same class of
algebras.

The structure of the paper is as follows. Section~\ref{E54} is of preliminary nature:
there we introduce and discuss monotonicity and semicontinuity properties in convex algebras on $[0,1]$ 
(with the standard ordering and topology). 
In Section~\ref{E34} we present the important examples which provide the building blocks of our general construction.
Section~\ref{E55} is dedicated to our main results: Theorems~\ref{E15},~\ref{E28},~\ref{E40}. 
The proofs of Theorem~\ref{E15} and Theorem~\ref{E40} are the technically most involved part of the paper.
In Section~\ref{E22} we discuss the connection of our identified properties and the semicontinuity property from \cite{mio:2025}. 
\begin{ack}
	We thank Matteo Mio who brought up the question about the structure of convex algebras on $[0,1]$ with monotonicity and
	semicontinuity properties. Without him asking, this paper would probably not exist.
\end{ack}

\section{Monotonicity and continuity properties in convex algebras}
\label{E54}
\subsection{Convex algebras}

We start with recalling definitions and facts about convex algebras which are relevant for the present paper.

\begin{definition}
\label{E3}
	A \emph{convex algebra} is a set $X$ together with a family $\{\oplus_p\DS p\in(0,1)\}$ of binary operations on $X$ that 
	satisfy
	\begin{Ilist}
	\item[] ${\forall x\in X,p\in(0,1)\DP x\oplus_px=x}$
		\hspace*{62mm}(\emph{idempotence})
	\item[] ${\forall x,y\in X,p\in(0,1)\DP x\oplus_py=y\oplus_{1-p}x}$
		\hspace*{46.3mm}(\emph{parametric commutativity})
	\item[] ${\forall x,y,z\in X,p,q\in(0,1)\DP (x\oplus_py)\oplus_qz=x\oplus_{pq}\Big(y\oplus_{\frac{(1-p)q}{1-pq}}z\Big)}$
		\hspace*{10mm}(\emph{parametric associativity})
	\end{Ilist}
	We write a convex algebra as $\langle X,\oplus_p\rangle$.
\end{definition}

\noindent
Convex algebras have been invented as a generalisation of the following example from linear algebra.

\begin{example}
\label{E25}
	Let $V$ be a vector space over the scalar field $\bb R$, and let $X$ be a convex subset of $V$. We define operations 
	$+_p$ on $X$ as 
	\begin{equation}
	\label{E27}
		x+_py\DE px+(1-p)y\quad\text{for }x,y\in X,p\in(0,1),
	\end{equation}
	and refer to those as the \emph{operations induced by linear combinations}.
	Then $\langle X,+_p\rangle$ is a convex algebra. 

	Those convex algebras $\langle X,\oplus_p\rangle$ that are isomorphic to an algebra of this form are characterised 
	by the validity of the cancellation law
	\[
		\forall x,y,z\in X\DQ\forall p\in(0,1)\DP x\oplus_p z=y\oplus_p z\ \Rightarrow\ x=y
	\]
	This fact goes back to \cite{stone:1949,kneser:1952}.
\end{example}

\noindent
Convex algebras can be axiomatised in different ways, see e.g.\ \cite{flood:1981,romanowska.smith:1990,sokolova.woracek:pcacon}. 
One very practical fact is that the signature can be enlarged by using the following formula as a definition of two additional 
operations $\oplus_0$ and $\oplus_1$:
\begin{Ilist}
\item[] ${\forall x,y\in X\DP x\oplus_1y=x\wedge x\oplus_0y=y}$
	\hspace*{15mm}(\emph{projection axiom})
\end{Ilist}
Doing so, the parametric commutative and associative laws remain valid.

One can extend the signature even further: one may take the set 
\[
	\Big\{(p_i)_{i=1}^n\in\bb R^n\DSB n\geq 1,p_i\geq 0,\sum_{i=1}^np_i=1\Big\}
\]
as operations symbols, assign to each $(p_i)_{i=1}^n$ an $n$-ary operation and require a \emph{projection axiom} and a 
\emph{barycenter axiom}, cf.\ \cite[Definition~3.1]{sokolova.woracek:pcacon}.
The connection with Definition~\ref{E3} is that $\oplus_p$ corresponds to the binary operation with symbol $(p,1-p)$.
Usually the $n$-ary operation corresponding to $(p_i)_{i=1}^n$ is written as a formal convex combination:
\[
	(x_1,\ldots,x_n)\mapsto\bigoplus_{i=1}^n p_ix_i.
\]
In convex algebras as in Example~\ref{E25}, i.e.\ with convex operations $+_p$ induced by linear combinations,
this formal convex combination $\Bigplus_{i=1}^n p_ix_i$ equals the actual linear combination provided by the surrounding vector space, 
i.e., 
\[
	\Bigplus_{i=1}^n p_ix_i=\sum_{i=1}^n p_ix_i.
\]

\begin{remark}
\label{E16}
	Enlarging the signature proves useful when describing subalgebras generated by a given subset of a convex algebra.  In
	fact, if $\langle X,\oplus_p\rangle$ is a convex algebra and $M\subseteq X$, then the smallest subalgebra of $X$
	containing $M$ is 
	\[
		\Big\{\bigoplus_{i=1}^np_ix_i\DSB n\geq 1,p_i\geq 0,\sum_{i=1}^np_i=1,x_i\in M\Big\}.
	\]
	This also gives rise to a construction of free convex algebras. Let $A$ be a nonempty set. Consider the vector space
	$\bb R^A$, and denote by $e_a\in\bb R^A$ the vector whose $a$-th component is equal to $1$ while all others are $0$ (the
	\emph{canonical basis vectors}). Then the free convex algebra with basis $A$ is the subalgebra of $\bb R^A$ generated by
	the set $\{e_a\DS a\in A\}$. Explicitly, this is 
	\begin{align*}
		\mc DA= &\, 
		\Big\{\bigoplus_{i=1}^np_ie_{a_i}\DSB n\geq 1,p_i\geq 0,\sum_{i=1}^np_i=1,a_i\in A\Big\}
		\\
		= &\, 
		\Big\{(p_a)_{a\in A}\in\bb R^A\DSB p_a\geq 0\text{ with only finitely many $p_a$ nonzero},
		\sum_{a\in A}p_a=1\Big\}.
	\end{align*}
	In this context it is practical to use the following notation. Let $\langle X,\oplus_p\rangle$ be a convex algebra, let
	$(p_a)_{a\in A}\in\mc DA$, and let $(x_a)_{a\in A}\in X^A$. Write $\{a\in A\DS p_a>0\}=\{a_1,\ldots,a_n\}$ and set 
	\[
		\bigoplus_{a\in A}p_ax_a\DE\bigoplus_{i=1}^np_{a_i}x_i.
	\]
	Using this notation, the action of the 
	homomorphic extension $\varphi^\#\DF\mc DA\to X$ of a map $\varphi\DF A\to X$ mapping the set 
	$A$ into some convex algebra $\langle X,\oplus_p\rangle$ can be described conveniently as 
	\[
		\varphi^\#\big((p_a)_{a\in A}\big)=\bigoplus_{a\in A}p_a\varphi(a).
	\]
\end{remark}

\begin{remark}
\label{E30}
	The free convex algebra with two generators $\mc D2$ can be identified with $\langle [0,1],+_p\rangle$ where $+_p$ is 
	as in \eqref{E27}. If $\langle X,\oplus_p\rangle$ is any convex algebra and $\varphi\DF\{0,1\}\to X$ is a map,
	then $\varphi$ admits a unique extension to a homomorphism of $[0,1]$ into $X$, namely the map $\varphi^\#$ acting as 
	\begin{equation}
	\label{E36}
		\varphi^\#(t)\DE\varphi(1)\oplus_t\varphi(0)\qquad\text{for }t\in[0,1].
	\end{equation}
	The congruence lattice of $\langle [0,1],+_p\rangle$ is very simple. It has exactly five elements, namely 
	\begin{Enumerate}
	\item the diagonal $\{(t,t)\DS t\in[0,1]\}$, 
	\item $\{(0,0)\}\cup\Big[(0,1)\times(0,1)\Big]\cup\{(1,1)\}$,
	\item $\Big[[0,1)\times[0,1)\Big]\cup\{(1,1)\}$,
	\item $\{(0,0)\}\cup\Big[(0,1]\times(0,1]\Big]$,
	\item the universal relation $[0,1]\times[0,1]$.
	\end{Enumerate}
	This is proven, e.g., in \cite[Proposition~3.5]{pumpluen.roehrl:1990} or \cite[Example~4.13]{sokolova.woracek:pcacon}.
\end{remark}

\noindent
Homomorphisms of the form \eqref{E36} appear frequently, and we introduce a notation for them.

\begin{definition}
\label{E21}
	Let $\langle X,\oplus_p\rangle$ be a convex algebra, and let $x,y\in X$. Then we denote 
	\begin{equation}
	\label{E29}
		\Gamma_{x,y}\DF\left\{
		\begin{array}{rcl}
			[0,1] & \to & X
			\\
			t & \mapsto & y\oplus_tx
		\end{array}
		\right.
	\end{equation}
\end{definition}

\noindent
The intuition is that $\Gamma_{x,y}$ is a path starting at $x$ for $t=0$ and ending at $y$ for $t=1$ 
that parameterises the line segment connecting $x$ with $y$.

Since $\Gamma_{x,y}$ is a homomorphism, the kernel of $\Gamma_{x,y}$ as a relation on $X$ is one of the five congruences listed in 
Remark~\ref{E30}. 

\begin{definition}
\label{E5}
	Let $\langle X,\oplus_p\rangle$ be a convex algebra, and let $x,y\in X$. Then we define a relation $\Eats$ on $X$ as 
	\[
		y\Eats x\DI \forall t\in(0,1]\DP y\oplus_tx=y.
	\]
\end{definition}

\noindent
We see that $y\Eats x$ if and only if $\Gamma_{x,y}$ is constant on $(0,1]$, and in turn if and only if all of $(0,1]$ belongs to one
class of $\ker\Gamma_{x,y}$. 
Since none of the congruences in Remark~\ref{E30}(i)-(iii) identifies $1$ with any other point while those from 
Remark~\ref{E30}(iv)-(v) identify all of $(0,1]$, we have $y\Eats x$ if and only if $\ker\Gamma_{x,y}$ equals one of the 
congruences from Remark~\ref{E30}(iv)-(v). 
Moreover, 
\begin{equation}
\label{E37}
	y\Eats x\ \Leftrightarrow\ \exists t\in(0,1)\DP y\oplus_tx=y.
\end{equation}
We point out that the relation $\Eats$ contains a lot of information about the structure of the convex algebra, and plays a central
role in the present paper. 

\begin{remark}
\label{E20}
	The equivalence \eqref{E37} could also be deduced from an implication that holds in 
	any convex algebra $\langle X,\oplus_p\rangle$:
	\[
		\forall x,y,z\in X\DP
		\big(\exists p\in(0,1)\DP x\oplus_pz=y\oplus_pz\big)
		\Rightarrow\big(\forall p\in(0,1)\DP x\oplus_pz=y\oplus_pz\big).
	\]
	This implication can be shown with a direct argument that does not refer to the description of the congruence lattice 
	of $\langle [0,1],+_p\rangle$, see e.g.\ \cite[p.91]{koenig:1986} or \cite[Lemma~2.8]{keimel.plotkin:2017}.
\end{remark}

\noindent
Sometimes it is practical to use the following computation rule which 
is shown using the parametric commutative and associative laws; we skip the details. 

\begin{lemma}
\label{E18}
	Let $x_1,\ldots,x_n\in X$ and $p_1,\ldots,p_n\geq 0$ with $\sum_{i=1}^np_i=1$. Let $j\in\{1,\ldots,n\}$ be such that 
	$p_j>0$, and set 
	\[
		y_i\DE
		\begin{cases}
			x_j \CAS x_j\Eats x_i,
			\\
			x_i \CASO.
		\end{cases}
	\]
	Then 
	\[
		\bigoplus_{i=1}^np_ix_i=\bigoplus_{i=1}^np_iy_i.
	\]
\end{lemma}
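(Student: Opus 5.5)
We will replace the $x_i$ one at a time, so it is enough to treat a single index. Fix $k\neq j$ with $x_j\Eats x_k$ (for $k=j$ there is nothing to do, since $y_j=x_j$ by idempotence/definition). We will show that replacing $x_k$ by $x_j$ does not change the value of $\bigoplus_{i=1}^np_ix_i$. Iterating over all such $k$ then gives the lemma. Note that $p_j$ stays positive and the entry at index $j$ is never altered, so the hypothesis is preserved at every step.

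If $p_k=0$, the projection axiom shows that the term at index $k$ does not contribute, and the claim is trivial. So assume $p_k>0$. Using parametric commutativity and associativity in the $n$-ary form (the barycenter axiom), we regroup the combination as
\[
	\bigoplus_{i=1}^np_ix_i=\big(x_j\oplus_{q}x_k\big)\oplus_{p_j+p_k}\Big(\bigoplus_{i\neq j,k}\tfrac{p_i}{1-p_j-p_k}x_i\Big),
	\qquad q\DE\tfrac{p_j}{p_j+p_k}\in(0,1).
\]
If $p_j+p_k=1$, the outer operation is the projection onto the first argument. Since $q\in(0,1]$, Definition~\ref{E5} gives $x_j\oplus_qx_k=x_j$. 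Also $x_j=x_j\oplus_qx_j$ by idempotence. Hence the inner pair can be rewritten as $x_j\oplus_qx_j$, and undoing the regrouping yields the same combination with $x_k$ replaced by $x_j$.

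The only real work is justifying the regrouping identity for formal $n$-ary combinations. For this we will invoke the barycenter axiom of the extended signature as in \cite[Definition~3.1]{sokolova.woracek:pcacon}, which is equivalent to the parametric laws of Definition~\ref{E3}. The degenerate cases where the weights $p_j+p_k$ or $1-p_j-p_k$ vanish are absorbed by the projection axiom. We expect no genuine obstacle beyond this bookkeeping.
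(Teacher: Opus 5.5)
Your proposal is correct and follows the route the paper indicates; the paper only says the rule follows from the parametric commutative and associative laws and omits the details. You regroup so that $x_j$ and $x_k$ sit together in $x_j\oplus_q x_k$ with $q=p_j/(p_j+p_k)$, use $x_j\oplus_q x_k=x_j=x_j\oplus_q x_j$, regroup back, and iterate over $k$, and the deferred regrouping bookkeeping does follow from the barycenter and projection axioms (for $p_k=0$, discarding the zero-weight term needs the barycenter axiom together with projection, not projection alone).
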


\noindent
We will use the following facts about subalgebras of $\langle\bb R,+_p\rangle$, and make them explicit for later reference.
They follow since a convex map between intervals has an extension to an affine map on $\bb R$; we skip the details.

\begin{remark}
\label{E14}
	Let $t>0$. We denote by $M_t\DF\bb R\to\bb R$ the map acting as multiplication by $t$, i.e., 
	$M_t(x)\DE tx$. The relevance of these maps in the present context is that they describe homomorphisms between 
	intervals. 
	\begin{Ilist}
	\item If $\tau,\tau'\in(0,\infty)$, then there exists a unique isomorphism of $\langle[0,\tau),+_p\rangle$ onto 
		$\langle[0,\tau'),+_p\rangle$. Namely, the map $M_{\frac{\tau'}\tau}$. 
	\item The set of all automorphisms of $\langle[0,\infty),+_p\rangle$ is 
		$\{M_t\DS t>0\}$.
	\item If $\tau,\tau'\in(0,\infty)$, then there exist exactly two isomorphisms of $\langle[0,\tau],+_p\rangle$ onto 
		$\langle[0,\tau'],+_p\rangle$. Namely, the maps $M_{\frac{\tau'}\tau}$ and 
		$x\mapsto\tau'-M_{\frac{\tau'}\tau}(x)$. 
	\end{Ilist}
	Furthermore we have:
	\begin{Ilist}
	\item For any $\tau\in(0,\infty)$ the algebras $\langle[0,\tau),+_p\rangle$ and $\langle[0,\infty),+_p\rangle$ are 
		not isomorphic.
	\item No two algebras $\langle[0,\tau],+_p\rangle$ with $\tau\in(0,\infty)$ and $\langle[0,\tau'),+_p\rangle$ with 
		$\tau'\in(0,\infty]$ are isomorphic. 
	\end{Ilist}
\end{remark}

\subsection{Monotone operations}

We consider convex algebras that are additionally endowed with a compatible order. 

\begin{definition}
\label{E1}
	Let $\langle X,\oplus_p\rangle$ be a convex algebra and let $\preceq$ be a partial order on $X$. 
	We consider the following property:
	\begin{itemize}
	\item[\hspace*{2mm}{\sf(MO)}\hspace*{-2mm}] \hspace*{2mm}
		${\displaystyle
			\forall x,x',y\in X\DQ\forall p\in(0,1)\DP
			x\preceq x'\ \Rightarrow\ x\oplus_py\preceq x'\oplus_py
		}$
	\end{itemize}
	If {\sf(MO)} holds, we call the triple $\langle X,\oplus_p,\preceq\rangle$ a \emph{monotone convex algebra}. 
	If the order $\preceq$ is clear from the context, which will be the case for most of the paper, we drop
	$\preceq$ from the notation and speak of the monotone convex algebra $\langle X,\oplus_p\rangle$.
\end{definition}

\begin{example}
\label{E32}
	Let $V$ be an ordered vector space and $X$ a convex subset of $V$. Let $+_p$ be the operations on $X$
	induced by linear combinations and let $\preceq$ be the order on $X$ inherited from $V$. 
	Then $\langle X,+_p,\preceq\rangle$ is a monotone convex algebra.

	A simple instance of such a situation is $\mc D2$; just use the identification from Remark~\ref{E30}.
\end{example}

\noindent
In conjunction with the computation rules that are valid in every convex algebra, the monotonicity axiom {\sf(MO)} implies 
stronger properties. Notably, a monotonicity property of $\oplus_p$ in the parameter $p\in[0,1]$ follows automatically.

\begin{lemma}
\label{E2}
	Let $\langle X,\oplus_p,\preceq\rangle$ be a monotone convex algebra. Then we have
	\begin{Enumerate}
	\item 
		${\displaystyle
			\forall x,x',y,y'\in X\DQ\forall p\in[0,1]\DP
			x\preceq x'\wedge y\preceq y'\ \Rightarrow\ x\oplus_py\preceq x'\oplus_py'
		}$,
	\item
		${\displaystyle
			\forall x,y\in X\DQ\forall p\in[0,1]\DP
			x\preceq y\ \Rightarrow\ x\preceq x\oplus_py\preceq y
			\,\wedge\, x\preceq y\oplus_px\preceq y
		}$,
	\item
		${\displaystyle
			\forall x,y\in X\DQ\forall p,p'\in[0,1]\DP
			x\preceq y\wedge p\leq p'\ \Rightarrow\ y\oplus_px\preceq y\oplus_{p'}x
		}$.
	\end{Enumerate}
\end{lemma}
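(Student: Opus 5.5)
The three items will be derived one after another: (i) from {\sf(MO)} together with parametric commutativity, (ii) from (i) together with idempotence, and (iii) from (ii) together with parametric associativity. Throughout, the endpoint cases $p\in\{0,1\}$ follow trivially from the projection axiom, so only $p\in(0,1)$ needs an argument.

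\textbf{Item (i).} Let $p\in(0,1)$, $x\preceq x'$ and $y\preceq y'$. By {\sf(MO)}, $x\oplus_py\preceq x'\oplus_py$. By parametric commutativity, $x'\oplus_py=y\oplus_{1-p}x'$. Applying {\sf(MO)} again, now in the first slot with parameter $1-p$, gives
\[
	y\oplus_{1-p}x'\preceq y'\oplus_{1-p}x'=x'\oplus_py'.
\]
Transitivity of $\preceq$ then yields (i).

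\textbf{Item (ii).} Let $x\preceq y$. Apply (i) to the pairs $(x,x)$ and $(x,y)$, and use idempotence, to get $x=x\oplus_px\preceq x\oplus_py$. Applying (i) to $(x,y)$ and $(y,y)$ gives $x\oplus_py\preceq y\oplus_py=y$. The statement for $y\oplus_px$ is the same argument with the roles of the two slots interchanged, or equivalently follows from commutativity.

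\textbf{Item (iii).} This is the only step that requires an idea: the parameter must be split so that monotonicity in the arguments can be applied. Let $x\preceq y$ and $p<p'$; the case $p=p'$ is trivial. If $p'=1$, the claim $y\oplus_px\preceq y$ is (ii). If $p=0$, the claim $x\preceq y\oplus_{p'}x$ is also (ii). So assume $0<p<p'<1$.

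The plan is to write
\[
	y\oplus_px=y\oplus_{p'}x\ \text{ or }\ (y\oplus_{p'}x)\oplus_q x
\]
for a suitable $q$. By parametric associativity,
\[
	(y\oplus_{p'}x)\oplus_q x=y\oplus_{p'q}\big(x\oplus_{r}x\big)=y\oplus_{p'q}x
\]
for the appropriate $r$, using idempotence. Choosing $q\DE p/p'\in(0,1)$ therefore gives $y\oplus_px=(y\oplus_{p'}x)\oplus_q x$. Set $z\DE y\oplus_{p'}x$. By (ii), $x\preceq z$. Applying (ii) to the pair $x\preceq z$ yields
\[
	x\preceq z\oplus_qx\preceq z,
\]
that is, $y\oplus_px\preceq y\oplus_{p'}x$, which is (iii).

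The main point to check is the associativity bookkeeping, namely that the inner parameter is irrelevant because the inner combination is $x\oplus_rx=x$.
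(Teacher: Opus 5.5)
Your proof is correct and follows the paper's argument: items (i) and (ii) are the same as in the paper, and (iii) uses the same idea of splitting the parameter via associativity and then invoking (ii). In (iii) you write the smaller element as $y\oplus_px=(y\oplus_{p'}x)\oplus_{p/p'}x$ and use the upper bound from (ii). The paper instead writes the larger element as $y\oplus_{p'}x=y\oplus_r(y\oplus_px)$ with $r=\frac{p'-p}{1-p}$ and uses the lower bound, so the two are mirror images of the same decomposition.
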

\begin{proof}
	\phantom{}
	\begin{Elist}
	\item Assume we have $x,x',y,y'\in X$ with $x\preceq x'$ and $y\preceq y'$, and $p\in[0,1]$. If $p=0$, then 
		$x\oplus_py=y\preceq y'=x'\oplus_py'$. If $p=1$, then $x\oplus_py=x\preceq x'=x'\oplus_py'$.
		For $p\in(0,1)$, we have 
		\[
			x\oplus_py\preceq x'\oplus_py=y\oplus_{1-p}x'\preceq y'\oplus_{1-p}x'=x'\oplus_py'.
		\]
	\item If $x,y\in X$, $x\preceq y$ and $p\in[0,1]$ we have 
		\[
			x=x\oplus_px\preceq x\oplus_py\preceq y\oplus_py=y.
		\]
		The second statement follows by parametric commutativity.
	\item Assume we have $x,y\in X$ with $x\preceq y$, and $p,p'\in[0,1]$ with $p\leq p'$. If $p=p'$ the assertion is 
		trivial, hence assume that $p<p'$. Then, in particular, $p<1$. 

		Set $r\DE\frac{p'-p}{1-p}$, then $r\in(0,1]$ and $r+(1-r)p=p'$. Hence, we obtain 
		\[
			y\oplus_{p'}x=y\oplus_r(y\oplus_px)\succeq y\oplus_px.
		\]
	\end{Elist}
\end{proof}

\begin{corollary}
\label{E4}
	Let $\langle X,\oplus_p,\preceq\rangle$ be a monotone convex algebra.
	Let $x,y\in X$ with $x\preceq y$. Then the map $\Gamma_{x,y}\DF[0,1]\to X$ from \eqref{E29} is either increasing 
	(in particular injective), or $\Gamma_{x,y}|_{(0,1)}$ is constant. 
\end{corollary}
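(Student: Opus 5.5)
Both alternatives are read off from the kernel of $\Gamma_{x,y}$, using the classification of congruences of $\langle[0,1],+_p\rangle$ in Remark~\ref{E30} together with the monotonicity in the parameter from Lemma~\ref{E2}(iii).

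First, by Lemma~\ref{E2}(iii), applied with $x\preceq y$, the map $\Gamma_{x,y}(t)=y\oplus_tx$ is weakly increasing on $[0,1]$: $t\leq t'$ implies $\Gamma_{x,y}(t)\preceq\Gamma_{x,y}(t')$. The only thing missing for "increasing" in the strict sense is injectivity. Since $\preceq$ is antisymmetric, a weakly increasing injective map is strictly increasing.

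Second, $\Gamma_{x,y}$ is a homomorphism (Remark~\ref{E30}), so $\ker\Gamma_{x,y}$ is one of the five congruences (i)--(v) listed there.
\begin{Ilist}
\item If the kernel is (i), the diagonal, then $\Gamma_{x,y}$ is injective, hence strictly increasing by the first step.
\item If the kernel is one of (ii), (iii), (iv) or (v), each of these contains $(0,1)\times(0,1)$. Hence $\Gamma_{x,y}$ is constant on $(0,1)$.
\end{Ilist}
This yields the dichotomy.

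The argument has no real obstacle. The one point to check is that each of (ii)--(v) identifies all points of $(0,1)$, which is immediate from the list in Remark~\ref{E30}. If one wants to avoid the congruence classification, there is a direct alternative: if $\Gamma_{x,y}(s)=\Gamma_{x,y}(s')$ for some $s<s'$, one can use the computation of Remark~\ref{E20} to spread this equality to all of $(0,1)$. The classification route is shorter.
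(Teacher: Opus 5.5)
Your proof is correct and follows the paper's argument: it combines monotonicity in the parameter from Lemma~\ref{E2}(iii) with the five-congruence classification of Remark~\ref{E30} applied to $\ker\Gamma_{x,y}$. You also make explicit two details the paper leaves implicit, namely that a nondecreasing injective map is strictly increasing, and that each of the congruences (ii)--(v) contains $(0,1)\times(0,1)$.
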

\begin{proof}
	By Lemma~\ref{E2}(iii) the map $\Gamma_{x,y}$ is nondecreasing. Its kernel is one of the five congruences of 
	$\langle [0,1],+_p\rangle$ listed in Remark~\ref{E30}. 
\end{proof}

\subsection{Semicontinuous operations}

We study convex algebras whose underlying set is the interval $[0,1]$, and whose convex operations $\oplus_p$ enjoy certain 
(semi-) continuity properties. 
\begin{Itemize}
\item {\bf Convention 1:} 
	Throughout the paper all topological notions on $[0,1]$ refer to the Euclidean topology, that is, the
	topology inherited from the Euclidean metric $d(x,y)\DE|x-y|$ on $\bb R$. 
\item {\bf Convention 2:}  
	Throughout the paper we denote by ``$\leq$'' the usual order on $[0,1]$. All order theoretic terms, in
	particular the condition {\sf(MO)}, refer to this order. 
\end{Itemize}
We point out that the operations $\oplus_p$ will usually not be equal to the operations $+_p$ on $[0,1]$
induced by linear combinations. Of course, $\langle[0,1],+_p\rangle$ provides an example.
To recap: topology and order on $[0,1]$ are always the usual ones, while convex operations $\oplus_p$ are not
(except in a single particular case).

\begin{definition}
\label{E33}
	Let $\langle[0,1],\oplus_p\rangle$ be a convex algebra. We consider the following properties:
	\begin{itemize}
	\item[{\sf(UC)}]
		${\displaystyle
			\forall x,y\in[0,1)\DQ\forall p\in(0,1)\DP
			\limsup_{\varepsilon\to 0+}\big[(x+\varepsilon)\oplus_p(y+\varepsilon)\big]\leq x\oplus_py
		}$
	\item[{\sf(LC)}]
		${\displaystyle
			\forall x,y\in[0,1]\DP x\leq y\ \Rightarrow\ 
			\liminf_{p\to 1-}\big[y\oplus_px\big]\geq y
		}$
	\end{itemize}
\end{definition}

\noindent
The terminology ``{\sf(UC)}'' and ``{\sf(LC)}'' stems from the fact that these properties express a kind of semicontinuity 
of the map 
\[
	(x,y,p)\mapsto x\oplus_py.
\]
For {\sf(LC)} this is straightforward: for each fixed $x,y$ belonging to the triangle 
\[
	\big\{(x,y)\in[0,1]^2\DS x\leq y\big\}
\]
the function $p\mapsto y\oplus_px$ is lower semicontinuous at the
point $p=1$. For {\sf(UC)} the interpretation is a bit more loose: {\sf(UC)} matches
upper semicontinuity of $(x,y)\mapsto x\oplus_py$ for fixed $p$, but is one-sided since $\varepsilon$ is only allowed 
to approach $0$ from above, and restricted since the variables $x,y$ are not allowed to vary independently from
each other.

It does not come as a surprise that in conjunction with monotonicity {\sf(MO)} of operations {\sf(UC)} and {\sf(LC)} imply 
stronger continuity properties. 

\begin{lemma}
\label{E6}
	Let $\langle [0,1],\oplus_p\rangle$ be a monotone convex algebra.
	\begin{Enumerate}
	\item If $\langle[0,1],\oplus_p\rangle$ satisfies {\sf(UC)}, then 
		\begin{equation}
		\label{E7}
			\forall x,y\in[0,1)\DQ\forall p\in[0,1]\DP
			\lim_{\substack{x'\to x+\\ y'\to y+}}\big[x'\oplus_py'\big]=x\oplus_py,
		\end{equation}
		\begin{equation}
		\label{E8}
			\forall x,y\in[0,1],x\leq y\DQ\forall q\in(0,1)\DP
			\lim_{p\to q+}\big[y\oplus_px\big]=y\oplus_qx.
		\end{equation}
	\item If $\langle[0,1],\oplus_p\rangle$ satisfies {\sf(LC)}, then 
		\begin{equation}
		\label{E9}
			\forall x,y\in[0,1],x\leq y\DQ\forall q\in(0,1]\DP
			\lim_{p\to q-}\big[y\oplus_px\big]=y\oplus_qx.
		\end{equation}
	\end{Enumerate}
\end{lemma}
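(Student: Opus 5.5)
For each of the three limit statements I would split the argument into a one-sided bound that comes for free from monotonicity and a bound in the other direction that comes from the semicontinuity hypothesis. The other direction is obtained by rewriting the expression with the parametric associative law so that the hypothesis applies literally. Throughout, the monotonicity facts I use are those of Lemma~\ref{E2}.

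\emph{Proof of \eqref{E7}.} Fix $x,y\in[0,1)$ and $p\in[0,1]$. For $p\in\{0,1\}$ the limit is just $y'\to y+$ or $x'\to x+$, so there is nothing to prove.

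Let $p\in(0,1)$. For $x'\geq x$ and $y'\geq y$, Lemma~\ref{E2}(i) gives $x'\oplus_py'\geq x\oplus_py$, so the $\liminf$ is at least $x\oplus_py$.

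For the upper bound, take $x'\in[x,x+\varepsilon]$ and $y'\in[y,y+\varepsilon]$ with $\varepsilon>0$ small, so that $x+\varepsilon,y+\varepsilon\leq1$. Lemma~\ref{E2}(i) gives
\[
x'\oplus_py'\leq (x+\varepsilon)\oplus_p(y+\varepsilon).
\]
Hence the $\limsup$ of $x'\oplus_py'$ as $(x',y')\to(x+,y+)$ is bounded by
\[
\limsup_{\varepsilon\to0+}\big[(x+\varepsilon)\oplus_p(y+\varepsilon)\big],
\]
and {\sf(UC)} bounds this by $x\oplus_py$. The same argument works if $x'=x$ or $y'=y$ is allowed.

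\emph{Proof of \eqref{E8}.} Fix $x\leq y$ and $q\in(0,1)$, and let $p>q$. By Lemma~\ref{E2}(iii), $y\oplus_px\geq y\oplus_qx$, which gives the lower bound. For the upper bound I write $p=r+(1-r)q$ with $r=\frac{p-q}{1-q}\to0+$. As in the proof of Lemma~\ref{E2}(iii),
\[
y\oplus_px=y\oplus_r(y\oplus_qx).
\]
The plan is to transfer the smallness of $r$ into {\sf(UC)}, and this is the main obstacle, because {\sf(UC)} perturbs the arguments rather than the parameter. I would try the following. Put $z\DE y\oplus_qx$ and note $z\leq y$. By parametric associativity,
\[
y\oplus_px=z\oplus_{1-r}y,
\]
and I want to show this is close to $z$. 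The idea is to dominate $y\oplus_rz$ by an expression $(u+\varepsilon)\oplus_q(v+\varepsilon)$ in which $u,v$ are chosen so that $u\oplus_qv=z$, and then apply {\sf(UC)} at $(u,v)$. For instance, take $v=x$ and $u=y$ slightly lowered, provided $y>x$ and $y<1$.

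Edge cases are handled separately:
\begin{itemize}
\item If $y\Eats x$, then $\Gamma_{x,y}$ is constant on $(0,1)$ by Corollary~\ref{E4}, and the statement is trivial.
\item Otherwise $\Gamma_{x,y}$ is injective and increasing.
\item The case $y=1$ needs a separate argument. There I would use that the map $\Gamma_{x,y}$ is a homomorphism $[0,1]\to[0,1]$ that is nondecreasing and injective. Its right-continuity at $q$ could then be obtained by comparing with $\Gamma$ for slightly smaller endpoints $x'\to x+$ using \eqref{E7}, i.e.\ via a monotone-homomorphism argument.
\end{itemize}

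\emph{Proof of \eqref{E9}.} Fix $x\leq y$ and $q\in(0,1]$, and let $p<q$. Lemma~\ref{E2}(iii) gives $y\oplus_px\leq y\oplus_qx$, so only the $\liminf$ needs to be bounded below. Write $p=sq$ with $s=p/q\to1-$. Then
\[
y\oplus_px=(y\oplus_qx)\oplus_s x.
\]
Since $x\leq w\DE y\oplus_qx$ by Lemma~\ref{E2}(ii), {\sf(LC)} applied to the pair $x\leq w$ gives $\liminf_{s\to1-}w\oplus_sx\geq w$. This is exactly the required bound, and the case $q=1$ is {\sf(LC)} itself. The step I expect to be hardest is therefore the upper bound in \eqref{E8}.
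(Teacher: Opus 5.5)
\paragraph{Verdict.} Your arguments for \eqref{E7} and \eqref{E9} are correct and are exactly the paper's. The gap is the upper bound in \eqref{E8}. You only sketch it, and the sketch does not close.

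\paragraph{Why the sketch fails.} The choice ``$v=x$ and $u=y$ slightly lowered'' contradicts your own requirement $u\oplus_q v=z=y\oplus_q x$, because lowering $y$ lowers $u\oplus_q x$.

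With the natural choice $u=y$, $v=x$ you need $y\oplus_p x\le(y+\varepsilon)\oplus_q(x+\varepsilon)$ with $\varepsilon\to0$. The evident route is $y\oplus_p x=y\oplus_q(y\oplus_r x)\le y\oplus_q(x+\varepsilon)$. This needs $y\oplus_r x\le x+\varepsilon$. But $\lim_{r\to0+}y\oplus_r x=V_{x,y}$, which can be strictly larger than $x$ even though $y$ does not eat $x$. For instance, in the algebra of Theorem~\ref{E15} with $E=\{0,\frac12,1\}$, take $x=0$ and $y=\frac34$; then $V_{x,y}=\frac12$.

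You might repair this by replacing $x$ with $V_{x,y}$. That requires $y\oplus_q x=y\oplus_q V_{x,y}$. But by \eqref{E7} the right side equals $\lim_{r\to0+}y\oplus_{q+(1-q)r}x=\lim_{p\to q+}y\oplus_p x$. So this is the statement you are trying to prove; in the paper it is Lemma~\ref{E13}(i), which is deduced from \eqref{E8}. Under the hypotheses of part (i) you have no {\sf(LC)}, hence no cancellation law, to break this circle.

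Two smaller points:
\begin{itemize}
\item The case $y=1$ is left as an unspecified ``monotone-homomorphism argument''.
\item Without {\sf(LC)}, the alternative to $y$ eating $x$ is not that $\Gamma_{x,y}$ is injective. Its kernel can be congruence (ii) of Remark~\ref{E30}, e.g.\ for $x\oplus_p y=\min\{x,y\}$. That case is trivial for \eqref{E8}, though.
\end{itemize}

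\paragraph{The missing idea.} Use that the nondecreasing map $\Gamma_{x,y}$ is continuous outside a countable set, and transport continuity along the parameter via the identity $y\oplus_{st}x=(y\oplus_t x)\oplus_s x$.
\begin{enumerate}
\item Choose $q'\in(q,1)$ at which $\Gamma_{x,y}$ is continuous.
\item For $p\downarrow q$ put $t=\frac pq\,q'$. Then $t\downarrow q'$ and $y\oplus_t x\downarrow y\oplus_{q'}x$.
\item We have $y\oplus_p x=(y\oplus_t x)\oplus_{q/q'}x$.
\item Right-continuity of $w\mapsto w\oplus_{q/q'}x$ at $w=y\oplus_{q'}x$ is \eqref{E7} with the second argument held fixed. (If $y\oplus_{q'}x=1$, the expression is constant.) It yields $\lim_{p\to q+}y\oplus_p x=(y\oplus_{q'}x)\oplus_{q/q'}x=y\oplus_q x$.
\end{enumerate}
No case distinction is needed: this covers $y=1$ and the non-injective cases uniformly.
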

\begin{proof}
	Assume that {\sf(UC)} holds and $x,y\in[0,1)$. 
	If $p=0$ or $p=1$, the limit relation \eqref{E7} trivially holds. Assume that $p\in(0,1)$,
	and consider sequences $(x_n)_{n=1}^\infty$ with $x_n>x$ and $x_n\to x$, and $(y_n)_{n=1}^\infty$ with 
	$y_n>y$ and $y_n\to y$. Set 
	\[
		\varepsilon_n\DE\max\{x_n-x,y_n-y\},
	\]
	and note $\varepsilon_n>0$. 
	Then $\lim_{n\to\infty}\varepsilon_n=0$ and hence $\max\{x+\varepsilon_n,y+\varepsilon_n\}<1$ for all sufficiently
	large $n$. For such $n$ we have 
	\[
		x\oplus_py\leq x_n\oplus_py_n\leq(x+\varepsilon_n)\oplus_p(y+\varepsilon_n),
	\]
	and obtain 
	\[
		\limsup_{n\to\infty}(x_n\oplus_py_n)\leq\limsup_{n\to\infty}\Big[(x+\varepsilon_n)\oplus_p(y+\varepsilon_n)\Big]
		\stackrel{\text{\sf(UC)}}{\leq} x\oplus_py\leq\liminf_{n\to\infty}(x_n\oplus_py_n).
	\]
	Thus \eqref{E7} holds. 

	We come to the proof of \eqref{E8}. Let $x,y\in[0,1]$, $x\leq y$, and $q\in(0,1)$ be given. 
	The function $\Gamma_{x,y}$ is
	nondecreasing, and hence has at most countably many discontinuities. Choose $q'\in(q,1)$ such that $\Gamma_{x,y}$ is
	continuous at $q'$. Consider now a sequence $(p_n)_{n=1}^\infty$ with $p_n>q$ and $p_n\to q$. 
	Set $t_n\DE\frac{p_n}qq'$. Then $t_n>q'$ and $t_n\to q'$, in
	particular $t_n<1$ for all sufficiently large $n$, and for such $n$
	\[
		y\oplus_{t_n}x\geq y\oplus_{q'}x\quad\text{and}\quad\lim_{n\to\infty}(y\oplus_{t_n}x)=y\oplus_{q'}x.
	\]
	Hence, using  \eqref{E7},
	\[
		y\oplus_qx=(y\oplus_{q'}x)\oplus_{\frac q{q'}}x
		=\big[\lim_{n\to\infty}(y\oplus_{t_n}x)\big]\oplus_{\frac q{q'}}x
		=\lim_{n\to\infty}\big[(y\oplus_{t_n}x)\oplus_{\frac q{q'}}x\big]=\lim_{n\to\infty}(y\oplus_{p_n}x).
	\]
	Assume now that {\sf(LC)} holds; we have to prove \eqref{E9}. Since $x\leq y$ we have $y\oplus_qx\geq x$, and hence 
	\[
		y\oplus_qx\geq\limsup_{t\to 1-}\big[(y\oplus_qx)\oplus_tx\big]\geq 
		\liminf_{t\to 1-}\big[(y\oplus_qx)\oplus_tx\big]\stackrel{\text{\sf(LC)}}{\geq} y\oplus_qx.
	\]
	So, the limit exists.
	However, $(y\oplus_qx)\oplus_tx=y\oplus_{qt}x$. Setting $p\DE qt$, the limit $t\to 1-$ corresponds to the
	limit $p\to q-$. 
\end{proof}

\noindent
The relation \eqref{E7} says that for each fixed $p\in[0,1]$ the map $(x,y)\mapsto x\oplus_py$ is continuous from the right on the
domain $[0,1)^2$. The relations \eqref{E8} and \eqref{E9} together say that for each fixed $x,y\in[0,1]$, $x\leq y$, the map 
$\Gamma_{x,y}$ is continuous on the domain $(0,1]$. 

The bulk of the paper deals with convex algebras satisfying all three conditions {\sf(MO)}, {\sf(UC)}, {\sf(LC)}. 

\begin{definition}
\label{E10}
	We call $\langle [0,1],\oplus_p\rangle$ a \emph{monotone almost continuous convex algebra}, if it is a convex algebra that 
	satisfies {\sf(MO)}, {\sf(UC)}, {\sf(LC)}. 
\end{definition}

\noindent
It is interesting to observe how continuity properties influence the algebraic structure of the algebra. 
One instance is the following statement.

\begin{lemma}
\label{E11}
	Let $\langle [0,1],\oplus_p\rangle$ be a monotone convex algebra that satisfies {\sf(LC)}. Then a one-sided 
	cancellation law holds in $\langle[0,1],\oplus_p\rangle$. Namely, 
	\[
		\forall x,y,z\in[0,1]\DQ\forall p\in(0,1)\DP
		x\oplus_pz=y\oplus_pz\,\wedge\, z\leq x\,\wedge\, z\leq y\ \Rightarrow\ x=y.
	\]
\end{lemma}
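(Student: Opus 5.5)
The plan is to combine the parameter-independence fact from Remark~\ref{E20} with the left-continuity at $p=1$ supplied by Lemma~\ref{E6}(ii). Let $x,y,z\in[0,1]$ and $p\in(0,1)$ satisfy $x\oplus_pz=y\oplus_pz$, with $z\leq x$ and $z\leq y$.

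First, I apply the implication from Remark~\ref{E20}. It holds in every convex algebra and says that if $x\oplus_pz=y\oplus_pz$ for one $p\in(0,1)$, then the same equality holds for every $p\in(0,1)$. So I may assume
\[
	x\oplus_tz=y\oplus_tz\qquad\text{for all }t\in(0,1).
\]

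Second, I use the order hypotheses $z\leq x$ and $z\leq y$. Lemma~\ref{E6}(ii) applies under {\sf(MO)} and {\sf(LC)}. I invoke its relation \eqref{E9} with $q=1$, once for the pair $(z,x)$ and once for the pair $(z,y)$. Since $\oplus_1$ is the first projection, this gives
\[
	\lim_{t\to1-}\big[x\oplus_tz\big]=x\oplus_1z=x,\qquad
	\lim_{t\to1-}\big[y\oplus_tz\big]=y\oplus_1z=y.
\]
The two functions of $t$ coincide on $(0,1)$ by the first step, so their limits coincide, and hence $x=y$.

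The only delicate point is that \eqref{E9} is stated for the ordered configuration $y\oplus_p x$ with the smaller element in the second slot. This is exactly why the hypothesis requires $z$ to lie below both $x$ and $y$. Without that ordering, {\sf(LC)} gives no control over the limit as $t\to1-$.
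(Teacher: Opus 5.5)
Your proof is correct and follows essentially the paper's route: get the equality for every parameter (the paper uses Remark~\ref{E20} implicitly, while you cite it explicitly), then let $t\to 1-$ using {\sf(LC)} together with the monotonicity bound $x\oplus_tz\le x$ for $z\le x$. The only difference is cosmetic: the paper sandwiches $y\oplus_t0\le y\oplus_tz=x\oplus_tz\le x$ to get $y\le x$ and $x\le y$ separately, whereas you use the two-sided limit \eqref{E9} at $q=1$, which packages the same two ingredients.
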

\begin{proof}
	For every $p\in(0,1)$ we have 
	\[
		y\oplus_p0\leq y\oplus_pz=x\oplus_pz\leq x,\quad
		x\oplus_p0\leq x\oplus_pz=y\oplus_pz\leq y.
	\]
	Passing to the limit ``$p\to 1-$'' yields $y\leq x$ and $x\leq y$. 
\end{proof}

\section{Basic examples}
\label{E34}

Let us give some examples of monotone almost continuous convex algebra.
We point out that these are more than just examples: we will see in Section~\ref{E55} that they are the principle building blocks 
that every monotone almost continuous convex algebra is made of.

\begin{example}
\label{E23}
	Let $X$ be a convex subset of $\bb R$. 
	We already mentioned in Example~\ref{E32} that the convex operations $+_p$ on $X$ induced by linear combinations 
	satisfy {\sf(MO)}. Since addition and multiplication in $\bb R$ are continuous, the map 
	\[
		\left\{
		\begin{array}{rcl}
			X^2\times [0,1] & \to & [0,1]
			\\
			(x,y,p) & \mapsto & x+_py
		\end{array}
		\right.
	\]
	is continuous. 
	In particular, we see that $\langle[0,1],+_p\rangle$ is a monotone almost continuous convex algebra.
	For later reference we emphasize that 
	\begin{equation}
	\label{E51}
		\forall x,y\in X\DP x\leq y\Rightarrow\lim_{p\to 1-}[y+_px]=y,
	\end{equation}
	and if $X$ has more than one element, i.e., $X$ is an interval with nonempty interior, also
	(here $\sup X$ is understood in $\bb R\cup\{\infty\}$)
	\begin{equation}
	\label{E50}
		\forall x,y\in X\setminus\{\sup X\}\DQ\forall p\in(0,1)\DP
		\lim_{\varepsilon\to 0+}\big[(x+\varepsilon)+_p(y+\varepsilon)\big]=x+_py.
	\end{equation}
\end{example}

\noindent
The second example has a completely different algebraic behaviour.

\begin{example}
\label{E24}
	We define
	\[
		x\oplus_py\DE\max\{x,y\}\qquad\text{for }x,y\in X,p\in(0,1).
	\]
	It is straightforward to check that $\langle[0,1],\oplus_p\rangle$ is a monotone convex algebra. 
	The maps $\max\DF\bb R^2\to\bb R$, and with it also
	\[
		\left\{
		\begin{array}{rcl}
			[0,1]^2\times(0,1) & \to & [0,1]
			\\
			(x,y,p) & \mapsto & x\oplus_py
		\end{array}
		\right.
	\]
	are continuous and thus {\sf(UC)} certainly holds. For {\sf(LC)} we observe that 
	\[
		\left\{
		\begin{array}{rcl}
			\{(x,y)\in[0,1]^2\DS x\leq y\}\times(0,1] & \to & [0,1]
			\\
			(x,y,p) & \mapsto & y\oplus_px
		\end{array}
		\right.
	\]
	is continuous. 

	Note that the map $(x,y,p)\mapsto y\oplus_px$ is not continuous on all of $[0,1]^2\times[0,1]$. For example we
	have $\lim_{p\to 0+}1\oplus_p0=1\neq 0=1\oplus_00$.

	Also note that every subset $M\subseteq[0,1]$ is a subalgebra of $\langle[0,1],\oplus_p\rangle$. 
\end{example}

\noindent
In the third and fourth example we present algebras with mixed behaviour. 

\begin{example}
\label{E31}
	We define operations $+_p$, $p\in(0,1)$, on $[0,1]$ as 
	\[
		x\oplus_py\DE
		\begin{cases}
			px+(1-p)y \CAS x,y\in[0,1)
			\\
			1 \CAS x=1\vee y=1
		\end{cases}
	\]
	It is easy to check that $\langle[0,1],\oplus_p\rangle$ is a monotone almost continuous convex algebra.
	Again, the map $(x,y,p)\mapsto x\oplus_p y$ is not continuous, for example, we have 
	$\lim_{x\to 1-} x\oplus_{\frac 12}0=\frac 12\neq 1=1\oplus_p0$.
\end{example}

\begin{example}
\label{E35}
	Let operations $\oplus_p$ on $[0,1]$ be defined as
	\[
		x\oplus_py\DE 1-(1-x)^p(1-y)^{1-p}\qquad\text{for }x,y\in [0,1],p\in(0,1).
	\]
	Instead of working with $[0,1]$, $\oplus_p$, and $\leq$, it is in this example more practical (and more 
	enlightening) to pass to an isomorphic copy. Consider the set $[0,\infty]$, the operations $+_p$ for $p\in(0,1)$, 
	defined on $[0,\infty]$ by the usual computational conventions
	\[
		x+_py\DE
		\begin{cases}
			px+(1-p)y \CAS x,y\in[0,\infty)
			\\
			\infty \CASO
		\end{cases}
	\]
	and the usual order $\leq$ on $[0,\infty]$. Then $\langle[0,\infty],+_p,\leq\rangle$ is a monotone convex algebra.
	The functions
	\[
		\left\{
		\begin{array}{rcl}
			[0,\infty]^2\times(0,1) & \to & [0,1]
			\\
			(x,y,p) & \mapsto & x+_py
		\end{array}
		\right.
		\quad\text{and}\quad
		\left\{
		\begin{array}{rcl}
			\{(x,y)\in[0,\infty]^2\DS x\leq y\}\times(0,1] & \to & [0,1]
			\\
			(x,y,p) & \mapsto & y+_px
		\end{array}
		\right.
	\]
	are continuous. In particular, {\sf(UC)} and {\sf(LC)} hold. 

	Now let $f\DF[0,\infty]\to[0,1]$ be defined as 
	\begin{equation}
	\label{E44}
		f(t)\DE
		\begin{cases}
			1-e^{-t} \CAS t\in[0,\infty)
			\\
			1 \CAS t=\infty
		\end{cases}
	\end{equation}
	Then $f$ is an increasing continuous bijection, and 
	\[
		\forall x,y\in[0,1]\DQ\forall p\in(0,1)\DP 
		x\oplus_py=f\big(f^{-1}(x)+_pf^{-1}(y)\big).
	\]
	We see that $\langle[0,1],\oplus_p\rangle$ is a monotone almost continuous convex algebra.
\end{example}

\section{Classification of monotone almost continuous convex algebras}
\label{E55}

In this section we present our main results which establish a description of all monotone almost continuous convex algebras. This
description is achieved by putting together three theorems.
\begin{Itemize}
\item Theorem~\ref{E15}: Constructing monotone almost continuous convex algebras from certain data.
\item Theorem~\ref{E28}: Showing that every monotone almost continuous convex algebra can be obtained by the above construction.
\item Theorem~\ref{E40}: Describing which data gives rise to isomorphic algebras.
\end{Itemize}
The construction of algebras in Theorem~\ref{E15} is done by using isomorphic copies of the algebras from 
Examples~\ref{E23}, \ref{E24}, \ref{E31}, \ref{E35} as building blocks and plugging them together with a general construction method.
We start from data 
\[
	E,\ (\sigma_{a,b})_{(a,b)\in\Delta}
\]
where $E$ is a closed subset of $[0,1]$ with $0\in E$, 
\[
	\Delta\DE\big\{(a,b)\in[0,1]^2\DS a,b\in E,a<b,(a,b)\cap E=\emptyset\big\},
\]
and $\sigma_{a,b}\in\{1,\infty\}$ for all $(a,b)\in\Delta$. The plugging together process leading to a monotone almost continuous convex 
algebra $\langle[0,1],\oplus_p\rangle$ can be illustrated as
\begin{center}
\begin{tikzpicture}[x=1.2pt,y=1.2pt,scale=1.2,font=\fontsize{12}{12}]
	\draw[|-|] (0,0) -- (300,0);
	\draw (0,-15) node {$0$};
	\draw (300,-15) node {$1$};

	\draw[fill,color=purple] (0,0) circle [radius=1.2];
	\draw[fill,color=purple] (2,0) circle [radius=1.2];
	\draw[fill,color=purple] (5,0) circle [radius=1.2];
	\draw[fill,color=purple] (10,0) circle [radius=1.2];
	\draw[fill,color=purple] (30,0) circle [radius=1.2];
	\draw[fill,color=purple] (35,0) circle [radius=1.2];
	\draw[fill,color=purple] (38,0) circle [radius=1.2];
	\draw[thick,|-|,color=purple] (40,0) -- (70,0);
	\draw[color=purple] (70,-10) node {$\scriptstyle a$};
	\draw[thick,|-|,color=purple] (120,0) -- (170,0);
	\draw[color=purple] (120,-10) node {$\scriptstyle b$};
	\draw[fill,color=purple] (260,0) circle [radius=1.2];
	\draw[color=purple] (260,-10) node {$\scriptstyle\max E$};
	\draw[fill,color=purple] (260,0) circle [radius=1.2];
	\draw[fill,color=purple] (225,0) circle [radius=1.2];
	\draw[fill,color=purple] (220,0) circle [radius=1.2];
	\draw[fill,color=purple] (217,0) circle [radius=1.2];
	\draw[fill,color=purple] (215,0) circle [radius=1.2];
	\draw[color=purple] (215,-10) node {$\scriptstyle b'$};
	\draw[fill,color=purple] (195,0) circle [radius=1.2];
	\draw[color=purple] (195,-10) node {$\scriptstyle a'$};
	\draw[fill,color=purple] (180,0) circle [radius=1.2];
	\draw[fill,color=purple] (175,0) circle [radius=1.2];
	\draw[fill,color=purple] (172,0) circle [radius=1.2];

	\draw[color=purple] (0,15) node {\scalebox{0.8}{$E$ (purple)}};
	\draw[color=darkgray] (260,40) node {\scalebox{0.8}{$\langle[0,1],+_p\rangle$}};
	\draw[color=darkgray] (70,-50) node {\scalebox{0.8}{$\langle[0,\sigma_{a,b}),+_p\rangle$}};
	\draw[color=darkgray] (200,-50) node {\scalebox{0.8}{$\langle[0,\sigma_{a',b'}),+_p\rangle$}};
	\draw[color=darkgray,dotted,thick] (90,-25)--(110,-25);
	\draw[color=darkgray,dotted,thick] (50,-25)--(70,-25);
	\draw[color=darkgray,dotted,thick] (175,-25)--(195,-25);
	\draw[color=darkgray,dotted,thick] (210,-25)--(230,-25);
	\draw[thick,darkgray,->] (260,30) to[out=270,in=90] (280,3);
	\draw[thick,darkgray,->] (70,-40) to[out=90,in=270] (95,-3);
	\draw[thick,darkgray,->] (200,-40) to[out=90,in=270] (205,-3);
\end{tikzpicture}
\end{center}
Thereby elements $x$ of $E$ act below themselves as 
\[
	\forall y\in[0,x]\DQ\forall p\in(0,1)\DP x\oplus_py=x
\]
and elements $x$ in some interval $(a,b)$ with $(a,b)\in\Delta$ or in the possibly present interval $(a,1]$ with $a\DE\max E$ act
below that interval as 
\[
	\forall y\in[0,a]\DQ\forall p\in(0,1)\DP x\oplus_py=x\oplus_pa
\]

\subsection{Constructing monotone almost continuous convex algebras}

\begin{theorem}
\label{E15}
	Let $E$ be a closed subset of $[0,1]$ with $0\in E$, set
	\[
		\Delta\DE\{(a,b)\in[0,1]^2\DS a,b\in E,a<b,(a,b)\cap E=\emptyset\},
	\]
	and let $(\sigma_{a,b})_{(a,b)\in\Delta}\in\{1,\infty\}^\Delta$.
	Then there exists a monotone almost continuous convex algebra $\langle[0,1],\oplus_p\rangle$ such that 
	\begin{Enumerate}
	\item $E=\{y\in[0,1]\DS y\Eats 0\}$,
	\item For every $(a,b)\in\Delta$ the subalgebra $\langle[a,b),\oplus_p\rangle$ is isomorphic to 
		$\langle[0,\sigma_{a,b}),+_p\rangle$,
	\item If $\max E<1$, the subalgebra $\langle[\max E,1],\oplus_p\rangle$ is isomorphic to $\langle[0,1],+_p\rangle$.
	\end{Enumerate}
\end{theorem}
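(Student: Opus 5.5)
We will construct the algebra explicitly as a Płonka-type sum over a semilattice of blocks, indexed by the totally ordered set of blocks with $\max$ as semilattice operation.

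\textbf{The blocks.} The blocks are:
\begin{Itemize}
\item the singletons $\{e\}$ for $e\in E$ that are not the left endpoint of an interval $(a,b)\in\Delta$ (and not $\max E$ when $\max E<1$);
\item the half-open intervals $[a,b)$ for $(a,b)\in\Delta$;
\item the interval $[\max E,1]$ if $\max E<1$.
\end{Itemize}
Every point of $[0,1]$ lies in exactly one block, and blocks are linearly ordered by position. For $x\le y$ in the same block $B$ we use the block operation:
\begin{Itemize}
\item on $[a,b)$, transport $+_p$ on $[0,\sigma_{a,b})$ via an increasing continuous bijection $\phi_{a,b}\DF[a,b)\to[0,\sigma_{a,b})$; take the affine map if $\sigma=1$, and $t\mapsto -\log(1-\frac{t-a}{b-a})$ if $\sigma=\infty$, as in Example~\ref{E35};
\item on $[\max E,1]$, use the affine transport of $+_p$;
\item singletons are trivial.
\end{Itemize}
For $x\in B'$, $y\in B$ with $B'$ strictly below $B$, we set $y\oplus_px=x\oplus_{1-p}y\DE y\oplus_p \min B$. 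That is, the lower element is replaced by the bottom $\min B$ of the higher block, which is an element of $E$. These transition maps $B'\to B$, constant with value $\min B$, are homomorphisms and compose correctly. Hence the Płonka sum axioms hold, and $\langle[0,1],\oplus_p\rangle$ is a convex algebra. Concretely, parametric associativity can be verified by noting that any expression in $x,y,z$ is computed in the block of $\max\{x,y,z\}$ after replacing the lower arguments by that block's minimum, where the block is itself a convex algebra.

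\textbf{Properties (i)--(iii).}
\begin{Itemize}
\item (ii) and (iii) are immediate from the construction, since $[a,b)$ and $[\max E,1]$ are subalgebras carrying the transported operations.
\item For (i), take $y\in E$. Then $y$ is the minimum of its block, so $y\oplus_p0=y\oplus_p y=y$, and thus $y\Eats 0$. Conversely, let $y\in(a,b)$ with $a=\min B$. Then $y\oplus_p0=y\oplus_pa=\phi^{-1}(p\phi(y))\ne y$, because $\phi(y)>0$ and $p<1$; so $y$ does not eat $0$.
\end{Itemize}

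\textbf{Monotonicity (MO).} Let $x\le x'$ and fix $y$. Write $\beta(u)$ for the block of $u$ and $\pi_B(u)$ for $\min B$ when $u$ lies below $B$ (and for $u$ itself when $u\in B$). Then $x\oplus_py$ is computed in the block $\max\{\beta(x),\beta(y)\}$. If $x,x'$ lead to the same top block, monotonicity follows from the monotone block operation, using $\pi_B(x)\le\pi_B(x')$. If $x'$ moves the computation to a higher block $B''$, then $x'\oplus_py\ge\min B''$. Meanwhile $x\oplus_py$ lies in a lower block, hence is $<\min B''$; this holds because the half-open interval blocks do not contain their right endpoint $b$, which forms the next block's minimum or is in $E$.

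\textbf{Semicontinuity.} (LC) is checked similarly. For $x\le y$ we have $y\oplus_px=y\oplus_p\pi(x)$ inside $y$'s block, where the transported linear structure gives $\to y$ as $p\to1-$. If $y$ is a singleton block the value is constantly $y$.

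The main obstacle is (UC). Given $x,y<1$, let $\varepsilon\to0+$. If $\max\{x,y\}$ lies in the interior of a block, or at the left endpoint of an interval block $[a,b)$, then $x+\varepsilon$ and $y+\varepsilon$ eventually stay in that block, or drop into lower blocks that get projected to $a$. Continuity of $\phi$ and of $+_p$ then gives the limit (compare \eqref{E50}). The delicate case is $m=\max\{x,y\}\in E$ being a limit from the right of infinitely many singleton or interval blocks. Then $(x+\varepsilon)\oplus_p(y+\varepsilon)$ lies in a block contained in $[m,m+\varepsilon]$ (all blocks to the right of $m$ and left of $m+\varepsilon$ are within that range), hence tends to $m=x\oplus_py$. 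Here I must also handle $x<y=m$ with $x+\varepsilon$ in a lower block, and the case where $m$ is the left endpoint $a$ with $(a,b)\in\Delta$. In the latter case, for small $\varepsilon$ both points lie in $[a,b)$ and continuity of $\phi_{a,b}$ at $a$ gives the limit. I expect this careful case bookkeeping of the order-topological position of $m$ relative to $E$ to be the main work.
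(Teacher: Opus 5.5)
Your proposal is correct and is essentially the paper's own proof. It uses the same Plonka sum over the blocks $\{e\}$, $[a,b)$, $[\max E,1]$ indexed by $E$, with constant transition maps onto the bottom of the higher block. It checks (MO), (LC) and (i)--(iii) the same way, and your (UC) case split is equivalent to the paper's: your ``right accumulation point'' case is exactly the paper's first case, where $\alpha(y+\varepsilon)$ is not eventually constant.

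One small correction in that case: the block containing $m+\varepsilon$ need not lie inside $[m,m+\varepsilon]$, since it can be some $[a',b')$ with $b'>m+\varepsilon$. The conclusion still holds, because $(x+\varepsilon)\oplus_p(y+\varepsilon)$ is at least that block's minimum, which is $\geq m$. It is also at most $m+\varepsilon$, because the block operation lies between its arguments.
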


\noindent
Before we go into the details of the proof let us explain the main tool used in the construction,
namely Plonka sums. They go back to \cite{plonka:1967} and have been studied further, 
e.g.\ in \cite{plonka.romanowska:1992,fusco.paoli:2026}. 

Assume we are given 
\begin{Itemize}
\item a set $I\neq\emptyset$ that is endowed with a partial order such that each two elements have a least upper bound,
\item a family of algebras $A_i$, $i\in I$, that all have the same finitary signature,
\item a family of homomorphisms $\phi_{ij}\DF A_i\to A_j$, $i,j\in I$, $i\leq j$, such that 
	\[
		\forall i\in I\DP \phi_{ii}=\Id_{A_i}
		\quad\text{and}\quad
		\forall i,j,k\in I,i\leq j\leq k\DP \phi_{jk}\circ\phi_{ij}=\phi_{ik}.
	\]
\end{Itemize}
Let $A$ be the disjoint union 
\[
	A\DE\bigcup_{i\in I}\raisebox{11pt}{$\mkern-16mu{\scriptscriptstyle\bullet}\mkern16mu$} A_i,
\]
and define an algebra structure on $A$ as follows. Let $\tau$ be an $n$-ary operation symbol 
in the signature of the algebras $A_i$ and denote the corresponding operation on $A_i$ as $f_\tau^i$. 
Given $a_1,\ldots,a_{n}\in A$, let $i_1,\ldots,i_{n}\in I$ be the indices with $a_l\in A_{i_l}$, set 
$k\DE\sup\{i_1,\ldots,i_{n}\}$, and define 
\[
	f_\tau(a_1,\ldots,a_{n})\DE f_\tau^k\big(\phi_{i_1k}(a_1),\ldots,\phi_{i_{n}k}(a_{n})\big).
\]
The so obtained algebra $\langle A,(f_\tau)_\tau\rangle$ is called the \emph{Plonka sum} of the algebras 
$\langle A_i,(f_\tau^i)_\tau\rangle$, $i\in I$.

For convex algebras we can picture the construction of the Plonka sum as
\begin{center}
\begin{tikzpicture}[x=1.2pt,y=1.2pt,scale=0.8,font=\fontsize{12}{12}]
	\draw(200,90) node[anchor=south west] {$x\oplus_py\DE\phi_{ik}(x)\oplus_p\phi_{jk}(y)$};
	\draw(15,50) circle[radius=15];
	\draw(120,60) circle[radius=20];
	\draw(60,120) ellipse[x radius=50,y radius=30];
	\draw[->,dashed] (10,60) to[out=100,in=210] (30,120);
	\draw[->,dashed] (125,70) to[out=90,in=340] (90,110);

	\draw(0,90) node {\footnotesize$\phi_{ik}$};
	\draw(127,97) node {\footnotesize$\phi_{jk}$};
	\draw(110,140) node[anchor=west] {\small$A_k$\ \ (where $k=\sup\{i,j\}$)};
	\draw(0,45) node[anchor=east] {\small$A_i$};
	\draw(140,50) node[anchor=west] {\small$A_j$};

	\draw[fill] (11,57) circle [radius=1];
	\draw(16,54) node {\tiny$x$};
	\draw[fill] (33,122) circle [radius=1];
	\draw(33,129) node {\tiny$\phi_{ik}(x)$};
	\draw[fill] (125,67) circle [radius=1];
	\draw(131,65) node {\tiny$y$};
	\draw[fill] (87,112) circle [radius=1];
	\draw(87,119) node {\tiny$\phi_{jk}(x)$};

	\draw[dotted] (33,122) to (87,112);
	\draw[fill] (51,118.66) circle [radius=1.5];
	\draw(51,108) node {\footnotesize$x\oplus_py$};
\end{tikzpicture}
\end{center}
Plonka observed that every equation that holds in all algebras $A_i$ and has the property that the sets of variables 
occuring on the left and right sides coincide also holds in $A$. This applies in particular to convex algebras: the idempotence
law and the parametric commutative and associative laws are all of that kind. Thus every Plonka sum of convex algebras is again
a convex algebra.

\begin{proof}(of Theorem~\ref{E15})
	The proof is carried out in several steps. First we define an algebraic structure on $[0,1]$ by making a suitable Plonka
	sum. Then we check that it has all required properties.
\begin{Steps}
\item
	In this step we define the ingredients for a Plonka sum that will give an algebra structure on $[0,1]$. 
	The index set of this sum is $E$, and it is totally ordered with the usual order. 

	For each $a\in E$ we define a convex algebra $\langle X^{(a)},\oplus^{(a)}_p\rangle$. 
	The carrier set $X^{(a)}$ is 
	\begin{equation}
	\label{E53}
		X^{(a)}\DE
		\begin{cases}
			[a,b) \CAS (a,b)\in\Delta
			\\
			[a,1] \CAS a=\max E
			\\
			\{a\} \CASO
		\end{cases}
	\end{equation}
	Note that these case distinction could be formulated differently using that 
	\begin{align*}
		& (a,b)\in\Delta\ \Leftrightarrow\ E\cap(a,1]\neq\emptyset\wedge b\DE\inf\big(E\cap(a,1]\big)>a
		\\
		& a=\max E\ \Leftrightarrow\ E\cap(a,1]=\emptyset
	\end{align*}
	The set $X^{(a)}$ is endowed with operations $\oplus^{(a)}_p$ as follows. 
	In the first case in \eqref{E53} choose an increasing bijection $f_{a,b}\DF[0,\sigma_{a,b})\to[a,b)$ and transport the 
	operations induced by linear combinations on $[0,\sigma_{a,b})$ to $[a,b)$ with $f_{a,b}$:
	\[
		x\oplus_p^{(a)}y\DE f_{a,b}\Big(pf_{a,b}^{-1}(x)+(1-p)f_{a,b}^{-1}(y)\Big)
		\qquad\text{for }x,y\in X^{(a)},p\in(0,1).
	\]
	In the second case let $\oplus^{(a)}_p$ be the operations induced on $X^{(a)}$ by linear combinations. 
	In the last case $\oplus^{(a)}_p$ is trivial; the set $X^{(a)}$ has only one element. 

	For $a,a'\in E$ with $a<a'$ we define $\phi_{aa'}\DF X^{(a)}\to X^{(a')}$ as the constant map 
	\[
		\phi_{aa'}(x)\DE a'\qquad\text{for }x\in X^{(a)}.
	\]
	Clearly, $\phi_{aa'}$ is a homomorphism. Moreover, we set $\phi_{aa}\DE\Id_{X^{(a)}}$, $a\in E$. It is clear that 
	$\phi_{a'a''}\circ\phi_{aa'}=\phi_{aa''}$ for all $a\leq a'\leq a''$. 

	The sets $X^{(a)}$ are pairwise disjoint and their union is $[0,1]$. Hence, the Plonka sum of the algebras 
	$\langle X^{(a)},\oplus^{(a)}_p\rangle$, $p\in(0,1)$, gives operations $\oplus_p$ on $[0,1]$. 
\item
	We collect some properties of the algebras $\langle X^{(a)},\oplus^{(a)}\rangle$ and the homomorphisms $\phi_{aa'}$.
	\begin{Ilist}
	\item We have
		\begin{align*}
			& \forall a\in E\DP X^{(a)}\cap E=\{a\},
			\\
			& \forall x\in[0,1]\DP x\in X^{(\alpha(x))}\text{ where }\alpha(x)\DE\max\big(E\cap[0,x]\big).
		\end{align*}
		The function $\alpha\DF[0,1]\to E$ has several important properties. It is nondecreasing, it is continuous from
		the right (i.e.\ $\lim_{y\to x+}\alpha(y)=\alpha(x)$ for all $x\in[0,1)$), and 
		\[
			\forall x\in[0,1]\DP x\in E\ \Leftrightarrow\ x=\alpha(x)
		\]
	\item Not only the points of $E$ are totally ordered, but also the algebras $X^{(a)}$ are 
		ordered as whole blocks in the sense that 
		\[
			\forall a,a'\in E\DP a<a'\ \Leftrightarrow\ \Big(\forall x\in X^{(a)},y\in X^{(a')}\DP x<y\Big).
		\]
		Moreover, $a$ is the smallest element of $X^{(a)}$.
	\item The homomorphisms $\phi_{ij}$ have the monotonicity properties
		\begin{align*}
			& \forall x,y\in[0,1),x\leq y\DP \phi_{\alpha(x),\alpha(y)}(x)\leq y,
			\\
			& \forall a,c\in E,a\leq c\DQ\forall x,y\in X^{(a)},x\leq y\DP \phi_{ac}(x)\leq\phi_{ac}(y).
		\end{align*}
	\item By the definition of $\oplus^{(a)}_p$, every algebra $\langle X^{(a)},\oplus^{(a)}_p\rangle$ is isomorphic to a 
		convex subset of $\bb R$ with operations induced by linear combinations. The involved isomorphism 
		(that is: the maps $f_{a,b}$ and the identity map, respectively) are increasing bijections. Therefore, all
		those maps and their inverses are also continuous. 

		Now we recall Example~\ref{E23}, where we saw that convex subsets of $\bb R$ with the operations $+_p$ satisfy 
		{\sf(MO)} and the analogue \eqref{E51} of {\sf(LC)}. Provided that it has more than one element, it also
		satisfies the analogue \eqref{E50} of {\sf(UC)}. These properties are inherited
		by the isomorphisms, and hence every algebra $\langle X^{(a)},\oplus^{(a)}_p\rangle$ satisfies 
		{\sf(MO)} and \eqref{E51}. Provided that $X^{(a)}$ has more than one element, also \eqref{E50} holds.
	\item Being isomorphic to convex subsets of $\bb R$ with operations induced by linear combinations, 
		the algebras $\langle X^{(a)},\oplus^{(a)}_p\rangle$ are all cancellative. In particular, it follows that 
		\[
			\forall x,y\in X^{(a)},x\neq y\DP \Gamma_{x,y}\text{ injective}.
		\]
	\end{Ilist}
\item
	We check that $\langle[0,1],\oplus_p\rangle$ satisfies {\sf(MO)}.

	Let $x,x',y\in[0,1]$ with $x\leq x'$ and $p\in(0,1)$ be given. Denote 
	\[
		a\DE\alpha(x),\ a'\DE\alpha(x'),\ b\DE\alpha(y),\ c\DE\max\{a,b\},\ c'\DE\max\{a',b\}.
	\]
	Clearly, $a\leq a'$ and $c\leq c'$. By the definition of $\oplus_p$, it holds that 
	\[
		x\oplus_py=\phi_{ac}(x)\oplus_p^{(c)}\phi_{bc}(y),\quad
		x'\oplus_py=\phi_{a'c'}(x)\oplus_p^{(c')}\phi_{bc'}(y).
	\]
	We have $x\oplus_py\in X^{(c)}$ and $x'\oplus_py\in X^{(c')}$. If $c<c'$ it readily follows that 
	$x\oplus_py<x'\oplus_py$. Assume that $c=c'$. Then 
	\[
		\phi_{ac}(x)=\phi_{a'c}\big(\underbrace{\phi_{aa'}(x)}_{\leq x'}\big)\leq\phi_{a'c}(x').
	\]
	Since the algebra $\langle X^{(c)},\oplus_p^{(c)}\rangle$ satisfies {\sf(MO)}, this relation implies that 
	\[
		\phi_{ac}(x)\oplus_p^{(c)}\phi_{bc}(y)\leq\phi_{a'c}(x)\oplus_p^{(c)}\phi_{bc}(y).
	\]
\item
	We check that $\langle[0,1],\oplus_p\rangle$ satisfies {\sf(UC)}. 

	Let $x,y\in[0,1)$ and $p\in(0,1)$ be given. W.l.o.g.\ assume that $x\leq y$. We distinguish several cases. 
	\begin{Ilist}
	\item Assume that there exists $\varepsilon_n\downarrow 0$ such that 
		\[
			\forall n\in\bb N\DP \alpha(y+\varepsilon_{n+1})<\alpha(y+\varepsilon_n).
		\]
		We have 
		\[
			x\oplus_py=\phi_{\alpha(x)\alpha(y)}(x)\oplus_p^{(\alpha(y))}y\in X^{(\alpha(y))},
		\]
		and hence $x\oplus_py\geq\alpha(y)$. For $\varepsilon\in(0,\varepsilon_{n+1}]$ we have 
		\[
			\alpha(y+\varepsilon)\leq\alpha(y+\varepsilon_{n+1})<\alpha(y+\varepsilon_n).
		\]
		Since
		\[
			(x+\varepsilon)\oplus_p(y+\varepsilon)=
			\phi_{\alpha(x+\varepsilon)\alpha(y+\varepsilon)}(x)\oplus_p^{(\alpha(y+\varepsilon))}y
			\in X^{(\alpha(y+\varepsilon))},
		\]
		it follows that 
		\[
			(x+\varepsilon)\oplus_p(y+\varepsilon)<\alpha(y+\varepsilon_n).
		\]
		We conclude that 
		\[
			\forall n\in\bb N\DP \limsup_{\varepsilon\to 0+}\big[(x+\varepsilon)\oplus_p(y+\varepsilon)\big]
			\leq\alpha(y+\varepsilon_n).
		\]
		Since $\lim_{n\to\infty}\alpha(y+\varepsilon_n)=\alpha(y)$, therefore 
		\[
			\limsup_{\varepsilon\to 0+}\big[(x+\varepsilon)\oplus_p(y+\varepsilon)\big]\leq\alpha(y).
		\]
	\item Assume that there exists $\varepsilon_0>0$ such that 
		\[
			\alpha(x+\varepsilon_0)<\alpha(y),\quad
			\forall \varepsilon\in(0,\varepsilon_0]\DP\alpha(y+\varepsilon)=\alpha(y).
		\]
		By monotonicity then also 
		\[
			\forall\varepsilon\in[0,\varepsilon_0]\DP\alpha(x+\varepsilon)<\alpha(y).
		\]
		Moreover, we have $y,y+\varepsilon_0\in X^{(\alpha(y))}$, and hence $X^{(\alpha(y))}$ has more than one element.

		Since {\sf(MO)} and \eqref{E50} hold in the algebra $\langle X^{(\alpha(y))},\oplus_p^{(\alpha(y))}\rangle$, we
		obtain 
		\begin{align*}
			x\oplus_py= &\, 
			\phi_{\alpha(x)\alpha(y)}(x)\oplus_p^{(\alpha(y))}y=\alpha(y)\oplus_p^{(\alpha(y))}y
			=\lim_{\varepsilon\to 0+}\big[\alpha(y)\oplus_p^{(\alpha(y))}(y+\varepsilon)\big]
			\\
			= &\, \lim_{\varepsilon\to 0+}
			\big[\phi_{\alpha(x+\varepsilon)\alpha(y)}(x+\varepsilon)\oplus_p^{(\alpha(y))}(y+\varepsilon)\big]
			=\lim_{\varepsilon\to 0+}\big[(x+\varepsilon)\oplus_p(y+\varepsilon)\big].
		\end{align*}
	\item Assume that there exists $\varepsilon_0>0$ such that 
		\[
			\forall \varepsilon\in(0,\varepsilon_0]\DP\alpha(y+\varepsilon)=\alpha(y)=\alpha(x+\varepsilon).
		\]
		Then also $\alpha(x)=\alpha(y)$. Again $X^{(\alpha(y))}$ has more than one element, and hence \eqref{E50} holds in 
		$\langle X^{(\alpha(y))},\oplus_p^{(\alpha(y))}\rangle$. We obtain 
		\begin{align*}
			x\oplus_py= &\, x\oplus_p^{(\alpha(y))}y\geq
			\limsup_{\varepsilon\to 0+}\big[(x+\varepsilon)\oplus_p^{(\alpha(y))}(y+\varepsilon)\big]
			\\
			= &\, \limsup_{\varepsilon\to 0+}\big[(x+\varepsilon)\oplus_p(y+\varepsilon)\big].
		\end{align*}
	\end{Ilist}
\item
	We check that $\langle[0,1],\oplus_p\rangle$ satisfies {\sf(LC)}. 

	Let $x,y\in[0,1]$ with $x\leq y$ be given. Then $\alpha(x)\leq\alpha(y)$, and hence for
	each $p\in(0,1)$
	\[
		y\oplus_px=y\oplus_p^{(\alpha(y))}\phi_{\alpha(x)\alpha(y)}(x).
	\]
	Since $\phi_{\alpha(x)\alpha(y)}(x)\leq y$ and the algebra $\langle X^{(\alpha(y))},\oplus_p^{(\alpha(y))}\rangle$
	satisfies \eqref{E51}, it follows that 
	\[
		\liminf_{p\to 1-}\big[y\oplus_px\big]=
		\liminf_{p\to 1-}\big[y\oplus_p^{(\alpha(y))}\phi_{\alpha(x)\alpha(y)}(x)\big]\geq y.
	\]
\item
	We show (i)--(iii). 

	Items (ii) and (iii) hold by the definition of $\oplus_p$. For the proof of (i) denote the set on the right
	side of (i) as $E^\oplus$. 
	Let $y\in E$. If $y=0$, then trivially $y\in E^\oplus$. Assume that $y>0$, then $\alpha(0)=0<y=\alpha(y)$, and hence for
	each $p\in(0,1)$
	\[
		0\oplus_py=\phi_{\alpha(0)\alpha(y)}(0)\oplus_p^{(\alpha(y))}y=\alpha(y)\oplus^{(\alpha(y))}\alpha(y)
		=\alpha(y)=y.
	\]
	We see that $E\subseteq E^\oplus$. 

	Assume that $y\in[0,1]\setminus E$. Then $\alpha(y)<y$, and therefore $\Gamma_{\alpha(y),y}$ is injective. It follows
	that 
	\[
		\alpha(y)\oplus_{\frac 12}y=\alpha(y)\oplus_{\frac 12}^{(\alpha(y))}y\neq y,
	\]
	and we see that $y\notin E^\oplus$.
\end{Steps}
\end{proof}

\subsection{Structure of monotone almost continuous convex algebras}

We have already remarked that the relation $\Eats$ from Definition~\ref{E5} contains a lot of information about the algebraic
structure of a convex algebra. The conditions {\sf(MO)}, {\sf(UC)}, {\sf(LC)} have surprisingly strong
consequences on this relation. The following definitions are crucial for the further development.

\begin{definition}
\label{E17}
	Let $\langle[0,1],\oplus_p\rangle$ be a convex algebra. Then we denote 
	\[
		E^\oplus\DE\big\{y\in[0,1]\DS y\Eats 0\big\},
	\]
	and 
	\[
		V_{x,y}\DE\inf\big\{y\oplus_px\DS p\in(0,1]\big\}\qquad\text{for }x,y\in[0,1],x\leq y.
	\]
\end{definition}

\noindent
Note that 
\[
	\forall x,y\in[0,1],x\leq y\DP x\leq V_{x,y}\leq y\wedge\Big(V_{x,y}=y\Leftrightarrow y\Eats x\Big).
\]
Indeed, the inequalities follow directly from $x \le y\oplus_p x\le y$ and if $y \Eats x$ then immediately $V_{x,y} = y$. If $V_{x,y} = y$, then $y$ is a lower bound for $\{y\oplus_px\DS p\in(0,1]\big\}$ and hence $y \le y\oplus_p x \le y$. 

We next present two results which contain the technical core for the proof of Theorem~\ref{E28}.
The role of the number $V_{x,y}$ will become apparent from Lemma~\ref{E12}(iii) and Lemma~\ref{E13}(i).

\begin{lemma}
\label{E12}
	Let $\langle[0,1],\oplus_p\rangle$ be a monotone convex algebra. 
	\begin{Enumerate}
	\item If $\langle[0,1],\oplus_p\rangle$ satisfies {\sf(UC)}, then 
		${\displaystyle \forall x,y\in[0,1],x\leq y\DP V_{x,y}\Eats x}$.
	\item If $\langle[0,1],\oplus_p\rangle$ satisfies {\sf(LC)}, then 
		\begin{align*}
			& \forall x,y\in[0,1]\DP
			y\Eats x\ \Leftrightarrow\ \Big(x\leq y\wedge\Gamma_{x,y}\text{ not injective}\Big),
			\\
			& \forall y\in(0,1]\DP
			\Big(\exists x\in[0,1]\!\setminus\!\{y\}\DP y\Eats x\Big)\Leftrightarrow y\Eats 0
			\Leftrightarrow\Big(\forall x\in[0,1],x\leq y\DP y\Eats x\Big),
			\\[1mm]
			& \forall x,y\in[0,1],x\leq y\DP (V_{x,y},y)\cap E^\oplus=\emptyset.
		\end{align*}
	\item If $\langle[0,1],\oplus_p\rangle$ satisfies {\sf(UC)} and {\sf(LC)}, then $E^\oplus$ is closed and 
		\[
			\forall y\in[0,1]\DP \max(E^\oplus\cap[0,y])=V_{0,y}
		\]
	\end{Enumerate}
\end{lemma}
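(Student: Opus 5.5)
Part (i) is proved by using (UC) to pass a limit through the operation. Fix $x\le y$ and set $v\DE V_{x,y}$; by Corollary~\ref{E4} the path $\Gamma_{x,y}$ is nondecreasing, so $v=\lim_{p\to0+}y\oplus_px$ and $y\oplus_px\downarrow v$. If $v=y$ then $y\Eats x$ trivially, so assume $v<y$. For $q\in(0,1)$ we have
\[
	(y\oplus_px)\oplus_qx=y\oplus_{pq}x\geq v .
\]
Conversely, write $y\oplus_px=v+\varepsilon_p$ with $\varepsilon_p\downarrow0$. Monotonicity gives $(v+\varepsilon_p)\oplus_q x\le (v+\varepsilon_p)\oplus_q(x+\varepsilon_p)$. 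When $x<1$, (UC) gives $\limsup$ of the right side $\le v\oplus_q x$, and hence $v\le v\oplus_qx$. This is the wrong direction, so I will instead compare with the left side. Since $y\oplus_{pq}x\to v$ as $p\to0$, we get
\[
	v=\lim_{p\to 0+}(v+\varepsilon_p)\oplus_qx\ \geq\ v\oplus_qx
\]
by (MO). Combined with $v\oplus_qx\geq$, hmm, this reverses again. The intended use of (UC) is \eqref{E7}: right-continuity of $(a,b)\mapsto a\oplus_qb$ at $(v,x)$. Since $x$ is fixed, I need right-continuity in the first variable alone; this follows from \eqref{E7} with $x$ held fixed, using the sandwich $v\oplus_qx\le(v+\varepsilon)\oplus_qx\le(v+\varepsilon)\oplus_q(x+\varepsilon)$. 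Then
\[
	v\oplus_qx=\lim_{p\to0+}(y\oplus_px)\oplus_qx=\lim_{p\to 0+} y\oplus_{pq}x=v ,
\]
so $v\Eats x$ by \eqref{E37}. The edge case $v=1$ forces $y=1=v$, which was already handled.

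Part (ii) rests on the one-sided cancellation of Lemma~\ref{E11} together with Corollary~\ref{E4}.

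For the first equivalence: if $y\Eats x$ with $x\le y$, then $\Gamma_{x,y}$ is constant on $(0,1]$, hence not injective. If instead $y<x$ and $y\Eats x$, then $y\oplus_tx\geq y\oplus_t y$, and (LC)/\eqref{E9} applied to the pair $y\le x$ with $p\to1-$ gives $y\oplus_p ... $. The cleaner route is to write $x\oplus_{1-t}y=y$ for all $t$ and let $1-t\to1-$; (LC) then gives $y\ge x$, a contradiction. Conversely, if $x\le y$ and $\Gamma_{x,y}$ is not injective, then $\ker\Gamma_{x,y}$ is one of the congruences (ii)--(v) of Remark~\ref{E30}. Case (iii) is excluded: it would force $y\oplus_px=x$ for all $p<1$, and (LC) with $p\to1-$ gives $x\ge y$, so $x=y$ and eating holds anyway. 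Case (ii) is excluded via Lemma~\ref{E11}. If $y\oplus_px=y\oplus_qx$ with $p<q$, write $y\oplus_qx=y\oplus_r(y\oplus_px)$ and $y\oplus_px=(y\oplus_px)\oplus_r(y\oplus_px)$. Since $x\le y\oplus_px\le y$, cancellation with $z=y\oplus_px$ yields $y=y\oplus_px$. So $y\Eats x$.

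For the second line, suppose $y\Eats x$ with $x\neq y$. Then $x<y$ by the first part, and monotonicity sandwiches $y\oplus_t0\le y\oplus_tx$. To upgrade this to $y\Eats0$, I would use associativity: $y\oplus_t0$ and $y$ produce the same value when combined with $x$ (which is $\ge0$), and then apply the cancellation of Lemma~\ref{E11}. Once $y\Eats0$ holds, every $x\le y$ is eaten by the sandwich $y=y\oplus_t0\le y\oplus_tx\le y$.

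For the third line, take $e\in(V_{x,y},y)\cap E^\oplus$. Then $e$ eats everything below it. Pick $p$ with $y\oplus_px<e$, and then $p'$ slightly larger with $y\oplus_{p'}x\in(\,\cdot\,,e]$; this uses continuity on $(0,1]$ from \eqref{E9}, or simply the intermediate values. I expect this to be the main obstacle. My plan is to show that $\Gamma_{x,y}$ attains $e$ at some $p_0\in(0,1)$: the left limits given by (LC) and the monotone structure should give $e=y\oplus_{p_0}x$. Then $e\Eats x$, and $y\oplus_{p_0}x=e$ eats $y\oplus_{p_0/2}x$. Writing $e=e\oplus_r(y\oplus_{p'}x)$ in the form $y\oplus_{s}x$ with $s<p_0$ makes $\Gamma_{x,y}$ non-injective, so by (ii) it is constant, forcing $V_{x,y}=y$, which contradicts $e\in(V_{x,y},y)$. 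If $e$ is not attained, (LC) is used to contradict a jump of $\Gamma_{x,y}$ across $e$.

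Part (iii) combines the two parts. By (i), $V_{0,y}\Eats0$, so $V_{0,y}\in E^\oplus\cap[0,y]$. By (ii), $(V_{0,y},y)\cap E^\oplus=\emptyset$, and $y\in E^\oplus$ iff $V_{0,y}=y$. Hence $V_{0,y}$ is the maximum of $E^\oplus\cap[0,y]$. For closedness, let $e_n\in E^\oplus$ with $e_n\to y$. If infinitely many $e_n$ lie below $y$, then $\max(E^\oplus\cap[0,y])=V_{0,y}\ge e_n$ for those $n$, so $V_{0,y}\ge y$ and $y\in E^\oplus$. If infinitely many lie above, with $e_n\downarrow y$, then $e_n\oplus_t0=e_n$, and \eqref{E7} (right-continuity, available since $y<1$) gives $y\oplus_t0=y$.
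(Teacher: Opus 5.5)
\textbf{Verdict.} Parts (i) and (iii) and the first formula of (ii) are sound. After the abandoned attempts, your (i) is the paper's argument, and (iii) matches it as well. Your use of Lemma~\ref{E11} for ``$\Leftarrow$'' in the first formula is a valid substitute for the paper's direct limit $p\to1-$ along the constant map $\Gamma_{x,y}|_{(0,1)}$. The gaps are in the second and third formulas of (ii).

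\textbf{Second formula.} You assert that $y\oplus_t0$ and $y$ give the same value when combined with $x$ ``by associativity''. Associativity only yields $(y\oplus_t0)\oplus_sx=y\oplus_{ts}\big(0\oplus_{\frac{(1-t)s}{1-ts}}x\big)$. By \eqref{E37}, equating this with $y=y\oplus_sx$ amounts to $y$ eating an element of $[0,x]$. That is essentially the claim itself, so the appeal to cancellation is circular. The missing step is where {\sf(LC)} enters. Since $x<y$, choose $t\in(0,1)$ with $y\oplus_t0\geq x$; your sandwich $y\oplus_t0\le y\oplus_tx$ goes in the useless direction. Then $y=y\oplus_sx\leq y\oplus_s(y\oplus_t0)=y\oplus_{s+(1-s)t}0\leq y$, so $y\Eats0$ by \eqref{E37}, with no cancellation needed.

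\textbf{Third formula.} Part (ii) assumes only {\sf(LC)}. So \eqref{E9} makes $\Gamma_{x,y}$ left-continuous, but right-continuity \eqref{E8} needs {\sf(UC)}. There is therefore no intermediate value property, and $e$ need not be a value of $\Gamma_{x,y}$. You treat only the attained case. The remark that {\sf(LC)} rules out a jump across $e$ is unfounded, since a left-continuous nondecreasing map may jump on the right.

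Attainment is not needed. As $V_{x,y}<y$, $\Gamma_{x,y}$ is injective, hence strictly increasing. By the definition of $V_{x,y}$ and by {\sf(LC)}, choose $p,q\in(0,1)$ with $y\oplus_px\leq e\leq y\oplus_qx$. Since $e>0$ lies in $E^\oplus$, the repaired second formula shows that $e$ eats $y\oplus_{p/2}x<e$. So for all $r\in(0,1)$
\[
 e=e\oplus_r\big(y\oplus_{\frac p2}x\big)\leq(y\oplus_qx)\oplus_r\big(y\oplus_{\frac p2}x\big)=y\oplus_{rq+(1-r)\frac p2}x .
\]
For small $r$ the parameter is below $p$, so the right side is $<y\oplus_px\leq e$, a contradiction. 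This is your eat-and-reassociate idea, with the equation $e=y\oplus_{p_0}x$ replaced by the inequality $e\leq y\oplus_qx$, which {\sf(MO)} makes sufficient.
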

\begin{proof}
	\phantom{}
	\begin{Elist}
	\item We know that $x\leq V_{x,y}$ and, by monotonicity, that $V_{x,y}=\lim_{r\to 0+}(y\oplus_rx)$. 
		Hence \eqref{E7} yields
		\[
			V_{x,y}\geq V_{x,y}\oplus_{\frac 12}x=\big[\lim_{r\to 0+}(y\oplus_rx)\big]\oplus_{\frac 12}x
			=\lim_{r\to 0+}\big[(y\oplus_rx)\oplus_{\frac 12}x\big]
			=\lim_{r\to 0+}\big[y\oplus_{\frac r2}x\big]=V_{x,y}.
		\]
		Thus equality holds throughout, and we see that $V_{x,y}\Eats x$.
	\item \emph{First formula ``$\,\Rightarrow$'':}
		The function $\Gamma_{x,y}$ is constant on $(0,1]$, in particular, not injective. If
		$y\leq x$, then
		\[
			x=\lim_{p\to 1-}(x\oplus_py)=\lim_{p\to 1-}y=y.
		\]
		Hence, in any case, $x\leq y$.
		\\[1mm]
		\emph{First formula ``\,$\Leftarrow$'':}
		Since the kernel of $\Gamma_{x,y}$ is one of the five congruences exhibited in Remark~\ref{E30} and 
		$\Gamma_{x,y}$ is not injective, we know that 
		$\Gamma_{x,y}|_{(0,1)}$ is constant. Hence, 
		\[
			y=\lim_{p\to 1-}(y\oplus_px)=y\oplus_{\frac 12}x\leq y.
		\]
		We see that equality must hold throughout, and hence $y\Eats x$. 
		\\[1mm]
		\emph{Second formula ``\,$\Leftarrow$'':}
		Both backwards implications are trivial.
		\\[1mm]
		\emph{Second formula; first ``\,$\Rightarrow$'':}
		Assume $x\neq y$ and $y\Eats x$. Then $x<y$, and since $\lim_{p\to 1-}(y\oplus_p0)=y$ we can choose 
		$p\in(0,1)$ with $y\oplus_p0\geq x$. It follows that 
		\[
			y=y\oplus_px\leq y\oplus_p(y\oplus_p0)=y\oplus_{p+(1-p)p}0\leq y.
		\]
		We see that equality must hold throughout, and since $p+(1-p)p\in(0,1)$ it follows that $y\Eats 0$. 
		\\[1mm]
		\emph{Second formula; second ``\,$\Rightarrow$'':}
		This holds simply because of monotonicity: let $x\leq y$, then 
		\[
			y=y\oplus_{\frac 12}0\leq y\oplus_{\frac 12}x\leq y,
		\]
		and hence $y\Eats x$.
		\\[1mm]
		\emph{Third formula:}
		Assume towards a contradiction that $x,y\in[0,1],x\leq y$ and $(V_{x,y},y)\cap E^\oplus\neq\emptyset$. 
		Then, in particular, $V_{x,y}<y$ and hence $\neg(y\Eats x)$. Thus $\Gamma_{x,y}$ is injective. 
		Choose $z\in(V_{x,y},y)\cap E^\oplus$ and $p,q\in(0,1)$ with 
		\[
			y\oplus_px\leq z\leq y\oplus_qx.
		\]
		Since $y\oplus_{\frac p2}x<y\oplus_px\leq z$, we obtain for any $r\in(0,1)$ that 
		\[
			z=z\oplus_r(y\oplus_{\frac p2}x)\leq(y\oplus_qx)\oplus_r(y\oplus_{\frac p2}x)
			=y\oplus_{rq+(1-r)\frac p2}x.
		\]
		Since $\lim_{r\to 0+}(rq+(1-r)\frac p2)=\frac p2$, we find $r\in(0,1)$ with $rq+(1-r)\frac p2<p$. For such
		$r$ we have 
		\[
			y\oplus_{rq+(1-r)\frac p2}x<y\oplus_px\leq z,
		\]
		and this is a contradiction.
	\item We show that $E^\oplus$ is closed.
		Let $y\in[0,1]$ and assume we have a sequence $(y_n)_{n=1}^\infty$ in $E^\oplus$ with
		$y_n\to y$ and $y_n>y$. By \eqref{E7} we have 
		\[
			y\oplus_{\frac 12}0=\big[\lim_{n\to\infty}y_n\big]\oplus_{\frac 12}0
			=\lim_{n\to\infty}\big[y_n\oplus_{\frac 12}0\big]=\lim_{n\to\infty}y_n=y,
		\]
		and hence $y\Eats 0$. Assume we have a sequence $(y_n)_{n=1}^\infty$ in $E^\oplus$ with $y_n\to y$ and $y_n<y$. 
		By the third formula in (ii) we must have $y_n\leq V_{0,y}$, and conclude that $V_{0,y}=y$. This means that 
		$y\Eats 0$.

		To prove the second assertion, let $y\in[0,1]$. If $y\in E^\oplus$, then $V_{0,y}=y=\max(E^\oplus\cap[0,y])$. 
		Assume that $y\notin E^\oplus$. By (i) we have $V_{0,y}\in E^\oplus$ and by the third formula in (ii) we have 
		$(V_{0,y},y)\cap E^\oplus=\emptyset$. It again follows that $V_{0,y}=\max(E^\oplus\cap[0,y])$. 
	\end{Elist}
\end{proof}

\noindent
We also obtain important information about the complement of $E^\oplus$. 

\begin{lemma}
\label{E13}
	Let $\langle[0,1],\oplus_p\rangle$ be a monotone convex algebra.
	\begin{Enumerate}
	\item If $\langle[0,1],\oplus_p\rangle$ satisfies {\sf(UC)}, then
		\[
			\forall x,y\in[0,1]\DP\Big(x\leq y\Rightarrow\forall z\in[x,V_{x,y}]\DQ\forall p\in(0,1)\DP
			y\oplus_pz=y\oplus_pV_{x,y}\Big).
		\]
	\item If $\langle[0,1],\oplus_p\rangle$ satisfies {\sf(UC)} and {\sf(LC)}, then
		\[
			\forall x,y\in[0,1]\DP \Big(x<y\wedge(x,y]\cap E^\oplus=\emptyset\ \Rightarrow\ 
			\Gamma_{x,y}\DF[0,1]\to[x,y]\text{ is an increasing bijection}\Big).
		\]
	\end{Enumerate}
\end{lemma}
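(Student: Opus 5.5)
For (i), fix $x\leq y$ and abbreviate $V\DE V_{x,y}$. By Lemma~\ref{E12}(i) we know $V\Eats x$. The idea is to show $y\oplus_p x=y\oplus_p V$ for all $p\in(0,1)$ and then squeeze using (MO).

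Since $x\leq V\leq y$, the monotonicity from Lemma~\ref{E2}(i) gives $y\oplus_px\leq y\oplus_pz\leq y\oplus_pV$ for every $z\in[x,V]$. So it suffices to prove the reverse inequality $y\oplus_pV\leq y\oplus_px$.

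Recall, as in the proof of Lemma~\ref{E12}(i), that by monotonicity in the parameter $V=\lim_{r\to0+}(y\oplus_rx)$, with the limit taken from above. Apply \eqref{E7} with the first argument fixed, namely $y\oplus_p(\cdot)$ continuous from the right. This yields
\[
y\oplus_pV=\lim_{r\to 0+}\big[y\oplus_p(y\oplus_rx)\big]=\lim_{r\to0+}\big[y\oplus_{p+(1-p)r}x\big].
\]
By \eqref{E8} the right-hand side equals $y\oplus_px$, since $p+(1-p)r\to p+$.

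A small caveat: \eqref{E7} requires arguments in $[0,1)$. If $y=1$ or $V=1$, I will treat this separately. When $V=y$ the claim is trivial, because $y\Eats x$ and hence $y$ eats every $z\in[x,y]$ by monotonicity. When $y=1>V$, I will use right continuity only in the second argument. The proof of \eqref{E7} via (UC) needs both arguments to move, so here I would instead argue with the pair $(y\oplus_rx)$, $y$ fixed, bounding $y\oplus_p(y\oplus_r x)$ between $y\oplus_p V$ and $y\oplus_{p+(1-p)r}x$ and using \eqref{E8}.

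For (ii), take $x<y$ with $(x,y]\cap E^\oplus=\emptyset$. By Corollary~\ref{E4}, $\Gamma_{x,y}$ is either increasing or constant on $(0,1)$. In the latter case $\Gamma_{x,y}$ is not injective, so $y\Eats x$ by Lemma~\ref{E12}(ii). Since $x\neq y$, the second formula there gives $y\Eats0$, i.e.\ $y\in E^\oplus$, a contradiction. Hence $\Gamma_{x,y}$ is increasing and injective.

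For surjectivity onto $[x,y]$, Lemma~\ref{E6} makes $\Gamma_{x,y}$ continuous on $(0,1]$ with $\Gamma_{x,y}(1)=y$. It remains to show $\lim_{t\to0+}\Gamma_{x,y}(t)=V_{x,y}$ equals $x$, and then the intermediate value theorem finishes the argument.

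Suppose instead $V_{x,y}>x$. By Lemma~\ref{E12}(i), $V_{x,y}\Eats x$ with $V_{x,y}\neq x$, so the second formula of Lemma~\ref{E12}(ii) gives $V_{x,y}\in E^\oplus$. But $V_{x,y}\in(x,y]$, contradicting the hypothesis. Hence $V_{x,y}=x$.

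The main obstacle I expect is the boundary case in (i) where right continuity \eqref{E7} is unavailable at $1$. I would try to handle it by monotone squeezing together with \eqref{E8} and (LC).
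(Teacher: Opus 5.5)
Your proof is correct. Part (ii) is essentially the paper's argument: injectivity via Corollary~\ref{E4} and Lemma~\ref{E12}(ii), then $V_{x,y}=x$ via Lemma~\ref{E12}(i),(ii), then continuity on all of $[0,1]$ and the intermediate value theorem.

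In part (i) your main route differs from the paper's. You invoke right continuity \eqref{E7} of $y\oplus_p(\cdot)$ at $V\DE V_{x,y}$ to get the \emph{equality} $y\oplus_pV=\lim_{r\to0+}[y\oplus_p(y\oplus_rx)]$. That is what forces your case split at $y=1$. Strictly speaking, \eqref{E7} moves both arguments, so fixing the first one also needs a small monotone squeeze, which again requires $y<1$.

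The paper only needs the inequality direction, and that is pure monotonicity. Since $V$ is an infimum, $V\leq y\oplus_rx$ for every $r\in(0,1)$, hence
\[
y\oplus_px\leq y\oplus_pz\leq y\oplus_pV\leq y\oplus_p(y\oplus_rx)=y\oplus_{p+(1-p)r}x .
\]
Letting $r\to0+$ and using \eqref{E8} collapses the whole chain. This is exactly the fallback you sketch for $y=1>V$, and it works uniformly for all $x\leq y$. So the appeal to \eqref{E7} and the separate cases $V=y$ and $y=1$ are unnecessary, and the ``main obstacle'' you anticipate does not arise.

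Also drop the mention of {\sf(LC)} in your closing remark. Part (i) assumes only {\sf(UC)}, and neither your squeeze nor the paper's uses {\sf(LC)}.
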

\begin{proof}
	\phantom{}
	\begin{Elist}
	\item Let $z\in[x,V_{x,y}]$ and $p\in(0,1)$. Then we have for every $r\in(0,1)$ 
		\[
			y\oplus_px\leq y\oplus_pz\leq y\oplus_pV_{x,y}
			\leq y\oplus_p(y\oplus_rx)=y\oplus_{p+(1-p)r}x.
		\]
		Since $\lim_{r\to 0+}(p+(1-p)r)=p$, we obtain from \eqref{E8} that 
		\[
			\lim_{r\to 0+}\big[y\oplus_{p+(1-p)r}x\big]=y\oplus_px.
		\]
	\item Since $V_{x,y}\in[x,y]$ and $V_{x,y}\Eats x$, we have $V_{x,y}=x$. Since $\neg(y\Eats x)$ we know that 
		$\Gamma_{x,y}$ is injective, hence increasing. 
		Since $\Gamma_{x,y}$ is continuous on $(0,1]$ and $x=V_{x,y}=\lim_{p\to 0+}(y\oplus_px)$, the map $\Gamma_{x,y}$ is
		continuous on all of $[0,1]$. It follows that $\Gamma_{x,y}\DF[0,1]\to[x,y]$ is also surjective.
	\end{Elist}
\end{proof}

\noindent
Similarly as in Lemma~\ref{E18} we can use Lemma~\ref{E13}(i) to obtain a relation for sums with more than two summands. 
The proof is again carried out using the parametric commutativity and associativity laws; we skip the details.

\begin{corollary}
\label{E39}
	Let $\langle[0,1],\oplus_p\rangle$ be a monotone convex algebra that satisfies {\sf(UC)}. 
	Let $A\neq\emptyset$, $(x_a)_{a\in A}\in[0,1]^A$, and $(p_a)_{a\in A}\in\mc DA$. Moreover, let $a_0\in A$ and assume
	that $p_{a_0}>0$. Then 
	\[
		\bigoplus_{a\in A}p_ax_a=\bigoplus_{a\in A}p_a\max\big\{x_a,V_{0,x_{a_0}}\}.
	\]
\end{corollary}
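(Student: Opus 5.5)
Set $v\DE V_{0,x_{a_0}}$. The argument is a reduction to the binary statement of Lemma~\ref{E13}(i), organised like the proof of Lemma~\ref{E18}.

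First, by Lemma~\ref{E13}(i) with $x=0$ and $y=x_{a_0}$ (note $0\leq x_{a_0}$), we have for every $z\in[0,v]$ and every $p\in(0,1)$ that $x_{a_0}\oplus_pz=x_{a_0}\oplus_pv$. For any $z\in[0,1]$ we have $\max\{z,v\}=z$ unless $z<v$, in which case $z\in[0,v]$. Hence the identity can be rewritten as
\[
	\forall z\in[0,1]\DQ\forall p\in(0,1)\DP x_{a_0}\oplus_pz=x_{a_0}\oplus_p\max\{z,v\}.
\]
This is the two-summand case of the corollary. The boundary cases $p\in\{0,1\}$ are not needed, because summands with $p_a=0$ do not occur in the notation of Remark~\ref{E16}.

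Next we pass to $n$ summands. Write $\{a\DS p_a>0\}=\{a_0,a_1,\ldots,a_n\}$, with $a_0$ listed first. Let $a_1,\ldots,a_k$ be the indices with $x_{a_i}<v$; only these summands actually change. We replace them one at a time. For a single index $a_i$, use parametric commutativity and associativity to write
\[
	\bigoplus_a p_ax_a=\big(x_{a_0}\oplus_{s}x_{a_i}\big)\oplus_{q}w,
	\qquad s=\frac{p_{a_0}}{p_{a_0}+p_{a_i}},\quad q=p_{a_0}+p_{a_i},
\]
where $w$ is the convex combination of the remaining summands. If no other summands remain, there is nothing to regroup. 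Since $s\in(0,1)$, the binary identity lets us replace $x_{a_i}$ by $\max\{x_{a_i},v\}=v$ inside the bracket. Ungrouping then gives the same sum with $x_{a_i}$ replaced by $v$.

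The key point is that $x_{a_0}$ itself is never modified. It satisfies $\max\{x_{a_0},v\}=x_{a_0}$ because $v\leq x_{a_0}$, so the same pivot $x_{a_0}$ with the same weight $p_{a_0}$ is available at every step. Iterating over $i=1,\ldots,k$ yields the claimed equality.

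The only real obstacle is the bookkeeping of the regrouping. Concretely, one must check that the $n$-ary operation satisfies $\bigoplus_i p_ix_i=(x_1\oplus_{p_1/(p_1+p_2)}x_2)\oplus_{p_1+p_2}\bigoplus_{i\geq3}\frac{p_i}{1-p_1-p_2}x_i$, together with invariance under permutations of the summands. These are standard consequences of the barycenter axiom and of the parametric laws, exactly as used for Lemma~\ref{E18}, and I would simply invoke them. The analytic content is contained entirely in Lemma~\ref{E13}(i), which in turn rests on \eqref{E8} and thus on {\sf(UC)}.
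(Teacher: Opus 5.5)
Your proposal is correct and follows essentially the same route as the paper: the paper also derives the corollary from the binary identity of Lemma~\ref{E13}(i) (with $x=0$, $y=x_{a_0}$) and regroups using parametric commutativity and associativity, as for Lemma~\ref{E18}. The paper skips those regrouping details, and your step-by-step replacement with the fixed pivot $x_{a_0}$ supplies them correctly.
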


\noindent
Lemma~\ref{E13}(ii) enables us to construct a family of ''structural constants'' for a monotone almost continuous convex algebra.
More precisely, we can fully identify the algebraic behaviour on intervals containing no point of $E^\oplus$ in their interior.
Here we use the following notation.

\begin{definition}
\label{E42}
	Let $\langle[0,1],\oplus_p\rangle$ be a monotone almost continuous convex algebra. 
	Then we denote 
	\[
		\Delta^\oplus\DE\big\{(a,b)\in[0,1]^2\DS a,b\in E^\oplus,a<b,(a,b)\cap E^\oplus=\emptyset\big\}.
	\]
\end{definition}

\begin{theorem}
\label{E28}
	Let $\langle[0,1],\oplus_p\rangle$ be a monotone almost continuous convex algebra.
	\begin{Enumerate}
	\item Assume that $\max E^\oplus<1$. Then $\langle[\max E^\oplus,1],\oplus_p\rangle$ is isomorphic to 
		$\langle[0,1],+_p\rangle$. There exist exactly two isomorphisms, one of which is increasing and one decreasing.
	\item For each $(a,b)\in \Delta^\oplus$ there exists a unique element $\tau^\oplus_{a,b}\in\{1,\infty\}$ 
		such that $\langle[a,b),\oplus_p\rangle$ is isomorphic to $\langle[0,\tau^\oplus_{a,b}),+_p\rangle$. 
		Every isomorphism of $\langle[a,b),\oplus_p\rangle$ onto $\langle[0,\tau^\oplus_{a,b}),+_p\rangle$ is increasing.
		If $\tau^\oplus_{a,b}=1$, the isomorphism is unique. 
	\end{Enumerate}
\end{theorem}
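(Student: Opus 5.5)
For part (i), set $a\DE\max E^\oplus$, which exists since $E^\oplus$ is closed by Lemma~\ref{E12}(iii). Then $(a,1]\cap E^\oplus=\emptyset$, so Lemma~\ref{E13}(ii) with $x=a$, $y=1$ shows that $\Gamma_{a,1}\DF[0,1]\to[a,1]$ is an increasing bijection. Being of the form \eqref{E36}, $\Gamma_{a,1}$ is a homomorphism from $\langle[0,1],+_p\rangle$ into $\langle[0,1],\oplus_p\rangle$. Its image $[a,1]$ is therefore a subalgebra, and $\Gamma_{a,1}$ is an isomorphism onto it. By Remark~\ref{E14}(iii) with $\tau=\tau'=1$, $\langle[0,1],+_p\rangle$ has exactly two automorphisms, the identity and $t\mapsto 1-t$. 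Hence there are exactly two isomorphisms $[a,1]\to[0,1]$, namely $\Gamma_{a,1}^{-1}$ (increasing) and $1-\Gamma_{a,1}^{-1}$ (decreasing).

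For part (ii), fix $(a,b)\in\Delta^\oplus$. For every $y\in(a,b)$ we have $(a,y]\cap E^\oplus=\emptyset$, so by Lemma~\ref{E13}(ii) $\Gamma_{a,y}$ is an isomorphism of $\langle[0,1],+_p\rangle$ onto $[a,y]$. Thus $[a,b)=\bigcup_{y\in(a,b)}[a,y]$ is an increasing union of subalgebras, hence a subalgebra, and each piece is isomorphic to an interval with linear operations.

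To build a global isomorphism, I fix $y_0\in(a,b)$ and let $g_0\DE\Gamma_{a,y_0}^{-1}$. For $y>y_0$, the map $\Gamma_{a,y}^{-1}$ restricted to $[a,y_0]$ is an isomorphism onto some $[0,s]$ that sends $a$ to $0$. By Remark~\ref{E14}(iii) and the fact that $0\mapsto 0$, this restriction equals $M_s\circ g_0$, so the rescaled map $g_y\DE M_{1/s}\circ\Gamma_{a,y}^{-1}$ extends $g_0$. By the same uniqueness argument the maps $g_y$ are compatible for different $y$, so they glue to an increasing isomorphism $g\DF[a,b)\to[0,\tau)$ with $\tau\DE\sup_y g(y)\in(0,\infty]$. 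Since $[a,b)$ has no largest element, the image is indeed the half-open interval $[0,\tau)$. If $\tau<\infty$, I rescale by $M_{1/\tau}$, which gives $\tau^\oplus_{a,b}=1$; otherwise $\tau^\oplus_{a,b}=\infty$. Uniqueness of $\tau^\oplus_{a,b}$ follows from Remark~\ref{E14}, since $[0,1)$ and $[0,\infty)$ are not isomorphic.

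The most delicate point is the claim that every isomorphism $[a,b)\to[0,\tau^\oplus_{a,b})$ is increasing. Any isomorphism $h$ composed with $g^{-1}$ is an automorphism of $[0,\tau)$ with linear operations. By Remark~\ref{E14}(i)--(ii) such automorphisms are multiplications $M_t$, which are increasing; this uses that they fix $0$, the unique non-interior point. Hence $h$ is increasing. When $\tau=1$, the only such automorphism is the identity, so the isomorphism is unique.
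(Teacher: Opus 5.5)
Your proof is correct and follows essentially the same route as the paper. Part (i) uses Lemma~\ref{E13}(ii) together with the two automorphisms of $\langle[0,1],+_p\rangle$. Part (ii) exhausts $[a,b)$ by the intervals $[a,y]$, glues rescaled copies of $\Gamma_{a,y}^{-1}$, and invokes Remark~\ref{E14} for the uniqueness of $\tau^\oplus_{a,b}$ and the monotonicity of every isomorphism. The differences are cosmetic: you normalise at a fixed $y_0$ and get compatibility from the uniqueness in Remark~\ref{E14}(iii), rather than taking a sequence $y_k\uparrow b$ with explicit constants $r_k,t_k$; you also spell out the uniqueness for $\tau^\oplus_{a,b}=1$, which the paper leaves implicit.
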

\begin{proof}
	We show item (i); assume that $a\DE\max E^\oplus<1$. By Lemma~\ref{E13}(ii) the map $\Gamma_{a,1}$ is an increasing 
	isomorphism of $\langle[0,1],+_p\rangle$ onto $\langle[a,1],\oplus_p\rangle$. Let $\Phi$ be any isomorphism of these
	algebras. Then $\Gamma_{a,1}^{-1}\circ\Phi$ is an automorphism of $\langle[0,1],+_p\rangle$. By Remark~\ref{E14} thus 
	either $\Phi(t)=\Gamma_{a,1}(t)$, $t\in[0,1]$, or $\Phi(t)=\Gamma_{a,1}(1-t)$, $t\in[0,1]$. 

	The proof of (ii) is slightly more technical, since we cannot work with the map $\Gamma_{a,b}$ directly. 
	The idea is to exhaust $[a,b)$ with smaller intervals $[a,y]$ on which the map $\Gamma_{a,y}$ can be used. 

	Let $(a,b)\in\Delta^\oplus$ be given. Choose a sequence $(y_k)_{k=1}^\infty$ with 
	\[
		a<y_1<y_2<\ldots\quad\text{and}\quad \lim_{k\to\infty}y_k=b.
	\]
	Then each map $\Gamma_{a,y_k}$ is an increasing isomorphism of $\langle[0,1],+_p\rangle$ onto 
	$\langle[a,y_k],\oplus_p\rangle$. We define sequences $(r_k)_{k=0}^\infty$ and $(t_k)_{k=1}^\infty$ by 
	\begin{align*}
		& r_0\DE 1,\qquad r_k\DE\big(\Gamma_{a,y_{k+1}}\big)^{-1}(y_k)\quad\text{for }k\geq 1,
		\\
		& t_k\DE\Big(\prod_{j=0}^{k-1}r_j\Big)^{-1}\quad\text{for }k\geq 1.
	\end{align*}
	Then $r_k\in(0,1)$ for all $k\geq 1$, $t_1=1$, and $(t_k)_{k=1}^\infty$ is an increasing sequence. 
	Set $t_\infty\DE\lim_{k\to\infty}t_k\in(1,\infty]$. 

	Plugging the definitions yields the diagram (recall the map $M_t$ from Remark~\ref{E14})
	\[
		\begin{tikzcd}[column sep=large]
			\langle[0,t_k],+_p\rangle \arrow[d,swap,"\subseteq"] 
			& \langle[0,1],+_p\rangle \arrow[d,"M_{r_k}"] \arrow[l,swap,"M_{t_k}"] \arrow[r,"\Gamma_{a,y_k}"]
			& \langle[a,y_k],\oplus_p\rangle \arrow[d,"\subseteq"] 
			\\
			\langle[0,t_{k+1}],+_p\rangle 
			& \langle[0,1],+_p\rangle \arrow[l,swap,"M_{t_{k+1}}"] \arrow[r,"\Gamma_{a,y_{k+1}}"] 
			& \langle[a,y_{k+1}],\oplus_p\rangle
		\end{tikzcd}
	\]
	To see the right wing of the diagram note that all involved maps are homomorphisms, and hence it suffices to check the 
	images of $0$ and $1$.
	It follows that an increasing isomorphism $F\DF\langle[0,t_\infty),+_p\rangle\to\langle[a,b),\oplus_p\rangle$ is
	well-defined by the requirement that 
	\[
		\forall k\geq 1\DP F|_{[0,t_k]}=\Gamma_{a,y_k}\circ(M_{t_k})^{-1}.
	\]
	If $t_\infty=\infty$, we set $\tau^\oplus_{a,b}\DE\infty$ and use $\Phi\DE F$. Otherwise, we rescale: 
	set $\tau^\oplus_{a,b}\DE 1$ and $\Phi\DE F\circ M_{t_\infty}|_{[0,1)}$.

	Assume now that we have $\tilde\tau^\oplus_{a,b}\in\{1,\infty\}$ and an isomorphism 
	$\tilde\Phi\DF\langle[0,\tilde\tau^\oplus_{a,b}),+_p\rangle\to\langle[a,b),\oplus_p\rangle$. 
	Then $\Phi^{-1}\circ\tilde\Phi$ is an isomorphism of $\langle[0,\tilde\tau^\oplus_{a,b}),+_p\rangle$ onto 
	$\langle[0,\tau^\oplus_{a,b}),+_p\rangle$. 
	Remark~\ref{E14} implies that $\tilde\tau^\oplus_{a,b}=\tau^\oplus_{a,b}$ and that 
	$\Phi^{-1}\circ\tilde\Phi$ is increasing. Since $\Phi$ is increasing, also $\tilde\Phi$ must be.
\end{proof}

\subsection{Describing isomorphy of algebras}

In our third theorem we describe how the set $E^\oplus$ and the structural constants $\tau^\oplus_{a,b}$ determine the algebra 
up to isomorphism. 

\begin{theorem}
\label{E40}
	Let $\langle[0,1],\oplus_p\rangle$ and $\langle[0,1],\oplus_p'\rangle$ be monotone almost continuous convex algebras.
	Then $\langle[0,1],\oplus_p\rangle$ and $\langle[0,1],\oplus_p'\rangle$ are isomorphic, if and only if 
	\begin{Enumerate}
	\item there exists an increasing bijection $\varphi\DF E^\oplus\to E^{\oplus'}$ with 
		(note here that $(\varphi\times\varphi)(\Delta^\oplus)=\Delta^{\oplus'}$)
		\begin{equation}
		\label{E45}
			\forall (a,b)\in\Delta^\oplus\DP \tau^\oplus_{a,b}=\tau^{\oplus'}_{\varphi(a),\varphi(b)}
		\end{equation}
		and
	\item $\max E^\oplus<1\ \Leftrightarrow\ \max E^{\oplus'}<1$.
	\end{Enumerate}
\end{theorem}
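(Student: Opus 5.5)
For the ``only if'' direction, fix an isomorphism $\Psi\DF\langle[0,1],\oplus_p\rangle\to\langle[0,1],\oplus'_p\rangle$. The first step is to show that $\Psi$ is an increasing bijection. The relation $\Eats$ is preserved by isomorphisms. By Lemma~\ref{E12}(ii), $0$ is the unique element $x$ such that every $y$ with $y\Eats x$ or $x\Eats y$ satisfies $y=x$, or $y\Eats x$ for all $y\in E$. The point $0$ is algebraically characterised as the unique $x$ that is eaten by every element $y$ which eats something other than itself. This description does not quite work when $E^\oplus=\{0\}$, so a more robust route is needed. We use that $\Psi(0)$ is the unique minimum of the order $y\le z\Leftrightarrow\exists$ path $\Gamma$ in which $z$ is an endpoint. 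Concretely, I would show that the usual order is definable from the algebra: $x\le y$ if and only if $\Gamma_{x,y}$ satisfies the conclusion of Lemma~\ref{E11}'s one-sided cancellation. Since this may be delicate, a fallback is to argue on each block. Granting that $\Psi(0)=0$, we get $\Psi(E^\oplus)=E^{\oplus'}$ because $y\Eats 0$ is preserved. On $E^\oplus$ the order is recovered algebraically: for $a,a'\in E^\oplus$, $a\le a'$ iff $a'\Eats a$, by the second formula of Lemma~\ref{E12}(ii). Hence $\varphi\DE\Psi|_{E^\oplus}$ is an increasing bijection onto $E^{\oplus'}$, and it maps $\Delta^\oplus$ onto $\Delta^{\oplus'}$.

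Next, for $(a,b)\in\Delta^\oplus$ we need $\Psi([a,b))=[\varphi(a),\varphi(b))$. The set $[a,b)$ is described algebraically as $\{a\}$ together with the elements $x\notin E^\oplus$ such that $\max\{e\in E^\oplus\DS x\Eats e\}=a$. This uses Lemma~\ref{E12}(ii),(iii): an element $x\notin E^\oplus$ eats exactly the points $e\le V_{0,x}=\alpha(x)$. Because $\varphi$ is increasing, the resulting isomorphism of subalgebras gives equality of the invariants $\tau$ by the uniqueness in Theorem~\ref{E28}(ii). The analogous description identifies $[\max E^\oplus,1]$ and yields (ii); alternatively, (ii) follows because the top block is the only block isomorphic to a closed interval (Remark~\ref{E14}). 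The main obstacle in this direction is proving $\Psi(0)=0$. I would first try the characterisation ``$0$ is the unique element $x$ such that $y\Eats x$ for every $y\in E^\oplus$ and such that $x\oplus_p z=z$ never holds for $z\neq x$''. If $E^\oplus=\{0\}$, the algebra is $[0,1]$ with $+_p$, and then $\Psi$ may be reflected; in that case we simply compose with the reflection, which does not affect the conditions.

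For the ``if'' direction, given $\varphi$, we build $\Psi$ blockwise. On $E^\oplus$ we set $\Psi\DE\varphi$. On each $[a,b)$ with $(a,b)\in\Delta^\oplus$ we use $\Phi'_{\varphi(a),\varphi(b)}\circ\Phi_{a,b}^{-1}$, built from the isomorphisms of Theorem~\ref{E28}(ii) with the common $\tau$; this map sends $a$ to $\varphi(a)$. On $[\max E^\oplus,1]$ we use the increasing isomorphisms from Theorem~\ref{E28}(i). These pieces partition $[0,1]$, so $\Psi$ is an increasing bijection. To prove that $\Psi$ is a homomorphism, we show that every algebra is the Plonka sum of its blocks over $E^\oplus$ with constant transition maps. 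For $x\in X^{(a)}$, $y\in X^{(c)}$ with $a<c$, Corollary~\ref{E39} and Lemma~\ref{E13}(i) give $x\oplus_p y=c\oplus_p y$, since $V_{0,y}=c\ge x$. The blockwise maps respect both the block operations and the constant transition maps, so $\Psi$ is an isomorphism of the Plonka sums.
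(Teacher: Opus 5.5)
The ``if'' direction is fine and is the paper's construction. Phrasing it as a Plonka sum of the blocks with constant transition maps is a tidy repackaging of the paper's case analysis; both rest on Lemma~\ref{E13}(i) in the form $y\oplus_px=y\oplus_pV_{0,y}$ for $x\le V_{0,y}$.

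The ``only if'' direction has a genuine gap exactly where you locate the main obstacle. You only grant $\Psi(0)=0$, and none of the proposed characterisations of $0$ works:
\begin{itemize}
\item The set of $x$ eaten by every element that eats something other than itself is $[0,b]$ with $b=\inf(E^\oplus\setminus\{0\})$. So it singles out $0$ only when $0$ is an accumulation point of $E^\oplus$; the failure is not confined to $E^\oplus=\{0\}$.
\item Your last characterisation is circular. $E^\oplus=\{y\DS y\Eats 0\}$ is defined through $0$, so it cannot be pushed through $\Psi$ before $\Psi(0)=0$ is known. Moreover, its second clause (no $z\neq x$ with $x\oplus_pz=z$, i.e.\ $x$ is eaten by nothing else) fails for $x=0$ as soon as $E^\oplus\neq\{0\}$.
\item The proposed definability of the order via Lemma~\ref{E11} is not proved. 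It also cannot hold without using $E^\oplus\neq\{0\}$, because of the reflection.
\item Your algebraic description of $[a,b)$ has the eating relation backwards. By Lemma~\ref{E12}(ii), an element $x\notin E^\oplus$ eats nothing but itself, so $\{e\in E^\oplus\DS x\Eats e\}$ is empty for $x\in(a,b)$. Lemma~\ref{E13}(i) gives $x\oplus_pe=x\oplus_pV_{0,x}$, not $x\oplus_pe=x$.
\end{itemize}

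The missing idea is to discard the point $0$ first and use eating in the other direction. By Lemma~\ref{E12}(ii), the set $E^\oplus\setminus\{0\}=\{y\DS \exists x\neq y\DP y\Eats x\}$ is defined purely algebraically, and for each such $y$ one has $\{x\DS y\Eats x\}=[0,y]$. Transporting these (and applying the same argument to $\Psi^{-1}$) gives, without knowing $\Psi(0)$:
\begin{itemize}
\item $\Psi(E^\oplus\setminus\{0\})=E^{\oplus'}\setminus\{0\}$, and $\Psi([0,y))=[0,\Psi(y))$ for $y\in E^\oplus\setminus\{0\}$;
\item hence $\Psi([a,b))=[\Psi(a),\Psi(b))$ for $a<b$ in $E^\oplus\setminus\{0\}$, and $\Psi$ is increasing on $E^\oplus\setminus\{0\}$;
\item and $\Psi(0)\le\inf(E^{\oplus'}\setminus\{0\})$.
\end{itemize}
If this infimum is $0$, you are done. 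Otherwise set $b'=\min(E^{\oplus'}\setminus\{0\})>0$ and $b=\Psi^{-1}(b')$. Then $\Psi$ maps $[0,b)$ onto $[0,b')$, and conjugating with the isomorphisms of Theorem~\ref{E28}(ii) gives an isomorphism $\langle[0,\tau),+_p\rangle\to\langle[0,\tau'),+_p\rangle$. By Remark~\ref{E14} this is a multiplication map, so it fixes $0$.

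This is the ``argue on each block'' step you mention as a fallback but never carry out. With it, the rest of your argument goes through as in the paper: equality of the $\tau$'s via Remark~\ref{E14}, and (ii) from $\Psi([0,\max E^\oplus])=[0,\Psi(\max E^\oplus)]$.
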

\begin{proof}
In Steps\,\ding{192} and \ding{193} of the proof our goal is to show that the properties (i) and (ii) of the theorem 
enable construction of an isomorphism.
Hence, in these two steps, assume we have an increasing bijection $\varphi\DF E^\oplus\to E^{\oplus'}$ that satisfies
\eqref{E45}, and that either $\max E^\oplus=\max E^{\oplus'}=1$, or $\max E^\oplus<1$ and $\max E^{\oplus'}<1$.
\begin{Steps}
\item
	In this step we construct a candidate for $\Phi$.
	Let $(a,b)\in\Delta^\oplus$ and denote $a'\DE\varphi(a)$, $b'\DE\varphi(b)$. Then $(a',b')\in\Delta^{\oplus'}$. 
	Theorem~\ref{E28}(ii) provides us with increasing isomorphisms 
	$f_{a,b}\DF\langle[0,\tau^\oplus_{a,b}),+_p\rangle\to\langle[a,b),\oplus_p\rangle$ and 
	$f'_{a',b'}\DF\langle[0,\tau^{\oplus'}_{a',b'}),+_p\rangle\to\langle[a',b'),\oplus_p'\rangle$.
	Since $\tau^\oplus_{a,b}=\tau^{\oplus'}_{a',b'}$, mutually inverse increasing isomorphisms 
	$\Phi_{a,b}$ and $\Phi_{a',b'}'$ are well-defined by the diagram
	\[
		\begin{tikzcd}[column sep=large]
			\langle[a,b),\oplus_p\rangle \arrow[rr,dashed,bend left=5,"\Phi_{a,b}"]
			&& \langle[a',b'),\oplus_p'\rangle \arrow[ll,dashed,bend left=5,"\Phi_{a',b'}'"]
			\\
			& \langle[0,\tau^\oplus_{a,b}),+_p\rangle \arrow[ul,"f_{a,b}"] \arrow[ur,swap,"f'_{a',b'}"] &
		\end{tikzcd}
	\]
	If $\max E^\oplus<1$, then also $\max E^{\oplus'}<1$, and Theorem~\ref{E28}(i) provides us with increasing isomorphisms 
	$f_1\DF\langle[0,1],+_p\rangle\to\langle[\max E^\oplus,1],\oplus_p\rangle$ and 
	$f'_1\DF\langle[0,1],+_p\rangle\to\langle[\max E^{\oplus'},1],\oplus_p'\rangle$.
	Again mutually inverse increasing isomorphisms $\Phi_{\max E^\oplus,1}$ and $\Phi_{\max E^{\oplus'},1}'$ 
	are well-defined by the diagram
	\[
		\begin{tikzcd}[column sep=large]
			\langle[\max E^\oplus,1],\oplus_p\rangle \arrow[rr,dashed,bend left=5,"\Phi_{\max E^\oplus,1}"]
			&& \langle[\max E^{\oplus'},1],\oplus_p'\rangle \arrow[ll,dashed,bend left=5,"\Phi_{\max E^{\oplus'},1}'"]
			\\
			& \langle[0,1],+_p\rangle \arrow[ul,"f_1"] \arrow[ur,swap,"f'_1"] &
		\end{tikzcd}
	\]
	We plug the maps $\varphi$ and $\Phi_{a,b}$ together: define $\Phi\DF[0,1]\to[0,1]$ as 
	\[
		\Phi(x)\DE
		\begin{cases}
			\varphi(x) \CAS x\in E^\oplus
			\\
			\Phi_{a,b}(x) \CAS (a,b)\in\Delta^\oplus,x\in[a,b)
			\\
			\Phi_{\max E^\oplus,1}(x) \CAS \max E^\oplus<1\ \text{and}\ x\in[\max E^\oplus,1]
		\end{cases}
	\]
	The map $\Phi$ is well-defined since 
	\[
		[0,1]=E^\oplus\cup\bigcup_{(a,b)\in\Delta^\oplus}[a,b)\cup
		\begin{cases}
			\emptyset \CAS \max E^\oplus=1
			\\
			[\max E^\oplus,1] \CAS \max E^\oplus<1
		\end{cases}
	\]
	for each $(a,b)\in\Delta^\oplus$ it holds that $\varphi(a)=a'$, and $\varphi(\max E^\oplus)=\max E^{\oplus'}$.

	In the same way a map $\Phi'\DF[0,1]\to[0,1]$ is well-defined by 
	\[
		\Phi'(x)\DE
		\begin{cases}
			\varphi^{-1}(x) \CAS x\in E^{\oplus'}
			\\
			\Phi_{a',b'}'(x) \CAS (a',b')\in\Delta^{\oplus'},x\in[a',b')
			\\
			\Phi_{\max E^{\oplus'},1}'(x) \CAS \max E^{\oplus'}<1\ \text{and}\ x\in[\max E^{\oplus'},1]
		\end{cases}
	\]
	Clearly, $\Phi$ and $\Phi'$ are inverses of each other.

\item
	In this step we show that the map $\Phi$ constructed above is an increasing isomorphism of $\langle[0,1],\oplus\rangle$ 
	onto $\langle[0,1],\oplus'\rangle$.

	The fact that $\Phi$ is increasing is easy to see from the definitions. 
	Let $x,y\in[0,1]$ with $x<y$ be given. We distinguish cases.
	\begin{Itemize}
	\item $x\in E^\oplus,y\in E^{\oplus'}$: Then 
		\[
			\Phi(x)=\varphi(x)<\varphi(y)=\Phi(y).
		\]
	\item $x\notin E^\oplus,y\in E^{\oplus'}$: Set 
		\[
			a\DE\max\big(E^\oplus\cap[0,x)\big),\quad 
			b\DE\min(E^\oplus\cap(x,1]).
		\]
		Then $b\leq y$, and hence 
		\[
			\Phi(x)=\Phi_{a,b}(x)<\varphi(b)\leq\varphi(y)=\Phi(y).
		\]
	\item $x\in E^\oplus,y\notin E^{\oplus'}$: Set 
		\[
			a\DE\max\big(E^\oplus\cap[0,y)\big),\quad 
			b\DE
			\begin{cases}
				\min(E^\oplus\cap(y,1]) \CAS E^\oplus\cap(y,1]\neq\emptyset
				\\
				1 \CASO
			\end{cases}
		\]
		Then $x\leq a$, and hence 
		\[
			\Phi(x)=\varphi(x)\leq\varphi(a)=\Phi_{a,b}(a)<\Phi_{a,b}(y)=\Phi(y).
		\]
	\item $x\notin E^\oplus,y\notin E^{\oplus'}$ and $(x,y)\cap E^\oplus=\emptyset$: Set 
		\[
			a\DE\max\big(E^\oplus\cap[0,x)\big),\quad 
			b\DE
			\begin{cases}
				\min(E^\oplus\cap(y,1]) \CAS E^\oplus\cap(y,1]\neq\emptyset
				\\
				1 \CASO
			\end{cases}
		\]
		Then $[x,y]\subseteq(a,b)$, and hence 
		\[
			\Phi(x)=\Phi_{a,b}(x)<\Phi_{a,b}(y)=\Phi(y).
		\]
	\item $x\notin E^\oplus,y\notin E^{\oplus'}$ and $(x,y)\cap E^\oplus=\emptyset$: Set 
		\[
			a_x\DE\max\big(E^\oplus\cap[0,x)\big),\quad 
			b_x\DE\min(E^\oplus\cap(x,1]),
		\]
		\[
			a_y\DE\max\big(E^\oplus\cap[0,y)\big),\quad 
			b_y\DE
			\begin{cases}
				\min(E^\oplus\cap(y,1]) \CAS E^\oplus\cap(y,1]\neq\emptyset
				\\
				1 \CASO
			\end{cases}
		\]
		Then $b_x\leq a_y$, and hence 
		\[
			\Phi(x)=\Phi_{a_x,b_x}(x)<\varphi(b_x)\leq\varphi(a_y)=\Phi_{a_y,b_y}(a_y)<\Phi_{a_y,b_y}(y)=\Phi(y).
		\]
	\end{Itemize}
	To prove that $\Phi$ is a homomorphism it is enough to check that 
	\[
		\forall x,y\in[0,1],x<y\DQ\forall p\in(0,1)\DP \Phi(y\oplus_p x)=\Phi(y)\oplus_p'\Phi(x).
	\]
	Hence, let $x,y\in[0,1]$ with $x<y$, and $p\in(0,1)$ be given. Again we distinguish cases.
	\begin{Itemize}
	\item $y\in E^\oplus$: Then $\Phi(y)=\varphi(y)\in E^{\oplus'}$, and $\Phi(x)<\Phi(y)$, and $y\oplus_p x=y$. 
		Hence, 
		\[
			\Phi(y)\oplus_p'\Phi(x)=\Phi(y)=\Phi(y\oplus_p x).
		\]
	\item $y\notin E^\oplus$: Set 
		\[
			a\DE\max\big(E^\oplus\cap[0,y)\big),\quad 
			b\DE
			\begin{cases}
				\min(E^\oplus\cap(y,1]) \CAS E^\oplus\cap(y,1]\neq\emptyset
				\\
				1 \CASO
			\end{cases}
		\]
		If $(x,y)\cap E^\oplus=\emptyset$, then $[x,y]\subseteq[a,b)$, and we can compute
		\[
			\Phi(y\oplus_p x)=\Phi_{a,b}(y\oplus_p x)=\Phi_{a,b}(y)\oplus_p'\Phi_{a,b}(x)=\Phi(y)\oplus_p'\Phi(x).
		\]
		If $(x,y)\cap E^\oplus\neq\emptyset$, then $x\leq a=V_{0,y}$ and $\Phi(x)\leq\Phi(a)=V_{0,\Phi(y)}$, 
		and we can compute 
		\begin{align*}
			\Phi(y\oplus_p x)= &\, \Phi(y\oplus_p a)=\Phi_{a,b}(y\oplus_p a)
			\\
			= &\, \Phi_{a,b}(y)\oplus_p'\Phi_{a,b}(a)=\Phi(y)\oplus_p'\Phi(a)=\Phi(y)\oplus_p'\Phi(x).
		\end{align*}
	\end{Itemize}

	\medskip
	We come to the proof of the converse implication.
\item
	Assume that $\Phi\DF\langle[0,1],\oplus_p\rangle\to\langle[0,1],\oplus_p'\rangle$ is an isomorphism. 
	We start with showing that 
	\begin{equation}
	\label{E46}
		\Phi\big(E^\oplus\setminus\{0\}\big)=E^{\oplus'}\setminus\{0\},
	\end{equation}
	\begin{equation}
	\label{E47}
		\forall y\in E^\oplus\setminus\{0\}\DP \Phi\big([0,y)\big)=[0,\Phi(y)).
	\end{equation}
	Let $x,y\in[0,1]$. Since $\Phi$ is an isomorphism, we have 
	\[
		y\Eats x\ \Leftrightarrow\ y\oplus_{\frac 12}x=y\ \Leftrightarrow\ \Phi(y)\oplus_{\frac 12}\Phi(x)=\Phi(y)
		\ \Leftrightarrow\ \Phi(y)\Eats\Phi(x)
	\]
	Let $y\in E^\oplus\setminus\{0\}$ and $x\in[0,y)$. Then $y\Eats x$ and $y\neq x$, and hence also 
	$\Phi(y)\Eats\Phi(x)$ and $\Phi(y)\neq\Phi(x)$. Thus $\Phi(y)\in E^{\oplus'}\setminus\{0\}$ and $\Phi(x)<\Phi(y)$. 

	The map $\Phi$ is bijective, and we conclude from \eqref{E47} that also
	\begin{equation}
	\label{E48}
		\begin{aligned}
			\forall y\in E^\oplus\setminus\{0\}\DP &\, 
			\Phi\big([0,y]\big)=[0,\Phi(y)]\wedge
			\\
			&\, \Phi\big([y,1]\big)=[\Phi(y),1]\wedge \Phi\big((y,1]\big)=(\Phi(y),1],
		\end{aligned}
	\end{equation}
	\begin{equation}
	\label{E49}
		\forall x,y\in E^\oplus\setminus\{0\},x<y\DP\Phi\big([x,y)\big)=[\Phi(x),\Phi(y)).
	\end{equation}
	Using these relations, we already obtain some pieces of the required assertions. 
	The relation \eqref{E46} implies that $\Phi|_{E^\oplus\setminus\{0\}}$ is a bijection of $E^\oplus\setminus\{0\}$ onto
	$E^{\oplus'}\setminus\{0\}$. In particular, we have 
	\[
		E^\oplus=\{0\}\ \Leftrightarrow\ E^{\oplus'}=\{0\}
	\]
	Moreover, \eqref{E47} implies that the restriction $\Phi|_{E^\oplus}$ is increasing. 

	We can also deduce item (ii) of the theorem. Assume that $\max E^\oplus=1$. Then 
	\[
		[0,1]=\Phi\big([0,1]\big)=\Phi\big([0,\max E^\oplus]\big)=[0,\Phi(\max E^\oplus)]
		,
	\]
	and thus $\Phi(\max E^\oplus)=1$. Since $\Phi(\max E^\oplus)\in\Phi(E^\oplus\setminus\{0\})\subseteq E^{\oplus'}$, 
	we see that $\max E^{\oplus'}=1$. To show the converse implication, apply the already proven to $\Phi^{-1}$.
\item
	We settle the case that $E^\oplus=E^{\oplus'}=\{0\}$. 
	It is clear that in this case the statements Theorem~\ref{E40}(i),(ii) hold (note
	that $\Delta^\oplus=\Delta^{\oplus'}=\emptyset$). 
	Let $f\DF\langle[0,1],+_p\rangle\to\langle[0,1],\oplus_p\rangle$ and 
	$f'\DF\langle[0,1],+_p\rangle\to\langle[0,1],\oplus_p'\rangle$ be the increasing isomorphisms that exist uniquely 
	according to Theorem~\ref{E28}(i).
	Then a map $\Phi\DF[0,1]\to[0,1]$ is an isomorphism of $\langle[0,1],\oplus_p\rangle$ onto 
	$\langle[0,1],\oplus_p'\rangle$ if and only if $(f')^{-1}\circ\Phi\circ f$ is an automorphism of 
	$\langle[0,1],+_p\rangle$. These automorphisms are known from Remark~\ref{E14}; they are the maps 
	$t\mapsto t$ and $t\mapsto 1-t$.

	\medskip
	From now on we shall assume that $E^\oplus\neq\{0\}$ (and hence also $E^{\oplus'}\neq\{0\}$). 
\item
	Throughout Steps\,\ding{196}\,--\,\ding{199} assume we are given an isomorphism 
	$\Phi\DF\langle[0,1],\oplus_p\rangle\to\langle[0,1],\oplus_p'\rangle$.
	In this step we show that $\Phi(0)=0$. 

	The relation \eqref{E47} implies that $\Phi(0)\leq\Phi(y)$ for every $y\in E^\oplus$, and since 
	$\Phi(E^\oplus\setminus\{0\})=E^{\oplus'}\setminus\{0\}$ we obtain 
	\[
		\Phi(0)\leq\inf\big(E^{\oplus'}\setminus\{0\}\big).
	\]
	If the infimum equals $0$, we are done. Assume that $b'\DE\inf(E^{\oplus'}\setminus\{0\})>0$. Since $E^{\oplus'}$ is
	closed, we have $b'\in E^{\oplus'}\setminus\{0\}$. It follows that $b\DE\Phi^{-1}(b')$ is the smallest element of 
	$E^\oplus\setminus\{0\}$. Thus $(0,b)\in\Delta^\oplus$ and $(0,b')\in\Delta^{\oplus'}$, and Theorem~\ref{E28}(ii) provides us
	with increasing isomorphisms $f\DF\langle[0,\tau^\oplus_{0,b}),+_p\rangle\to\langle[0,b),\oplus_p\rangle$ and 
	$f'\DF\langle[0,\tau^{\oplus'}_{0,b'}),+_p\rangle\to\langle[0,b'),\oplus_p'\rangle$.
	Again referring to \eqref{E47} we find that $\Phi([0,b))=[0,b')$, and hence a map $\psi$ is well-defined 
	by the following diagram and is an isomorphism:
	\[
		\begin{tikzcd}[column sep=large]
			\langle[0,b),\oplus_p\rangle \arrow[r,"\Phi"]
			& \langle[0,b'),\oplus_p'\rangle 
			\\
			\langle[0,\tau^\oplus_{0,b}),+_p\rangle \arrow[u,"f"] \arrow[r,dashed,swap,"\psi"]
			& \langle[0,\tau^{\oplus'}_{0,b'}),+_p\rangle \arrow[u,swap,"f'"] 
		\end{tikzcd}
	\]
	Remark~\ref{E14} implies that $\tau^\oplus_{0,b}=\tau^{\oplus'}_{0,b'}$ and that $\psi$ is increasing. 
	In turn $\Phi|_{[0,b)}$ is increasing, which yields in particular that $\Phi(0)=0$.

	Now we know that $\Phi(E^\oplus)=E^{\oplus'}$, and $\Phi|_{E^\oplus}\DF E^\oplus\to E^{\oplus'}$ is an increasing 
	bijection. Hence, $\Phi|_{E^\oplus}$ qualifies as a candidate for a map $\varphi$ in (ii).
\item
	Let $(a,b)\in\Delta^\oplus$. We show that \eqref{E45} holds and $\Phi|_{[a,b)}$ is increasing. 

	If $a=0$, this was already seen in the previous step of the proof. 
	The analogous argument applies in the case that 
	$a>0$. Since $\Phi|_{E^\oplus}\DF E^\oplus\to E^{\oplus'}$ is an increasing bijection,
	we have $(\Phi(a),\Phi(b))\in\Delta^{\oplus'}$. Denote $a'\DE\Phi(a)$ and $b'\DE\Phi(b)$. Theorem~\ref{E28}(ii) gives 
	increasing isomorphisms $f\DF\langle[0,\tau^\oplus_{a,b}),+_p\rangle\to\langle[a,b),\oplus_p\rangle$ and 
	$f'\DF\langle[0,\tau^{\oplus'}_{a',b'}),+_p\rangle\to\langle[a,b'),\oplus_p'\rangle$.
	The relation \eqref{E49} tells us that $\Phi([a,b))=[a',b')$, and an isomorphism $\psi$ is well-defined 
	by the diagram:
	\[
		\begin{tikzcd}[column sep=large]
			\langle[a,b),\oplus_p\rangle \arrow[r,"\Phi"]
			& \langle[a',b'),\oplus_p'\rangle 
			\\
			\langle[0,\tau^\oplus_{a,b}),+_p\rangle \arrow[u,"f"] \arrow[r,dashed,swap,"\psi"]
			& \langle[0,\tau^{\oplus'}_{a',b'}),+_p\rangle \arrow[u,swap,"f'"] 
		\end{tikzcd}
	\]
	Remark~\ref{E14} implies that $\tau^\oplus_{a,b}=\tau^{\oplus'}_{a',b'}$ and that $\psi$ is increasing. In turn also 
	$\Phi|_{[a,b)}$ is increasing.
\item
	Assume that $a\DE\max E^\oplus<1$ (equivalently, $a'\DE\max E^{\oplus'}<1$). We show that 
	$\Phi|_{[a,1]}$ is increasing. Again the analogous argument applies. We have $\Phi(a)=a'$, and by \eqref{E48} therefore 
	$\Phi([a,1])=[a',1]$. Theorem~\ref{E28}(i) gives increasing isomorphisms
	$f\DF\langle[0,1],+_p\rangle\to\langle[a,1],\oplus_p\rangle$ and 
	$f'\DF\langle[0,1],+_p\rangle\to\langle[a',1],\oplus_p'\rangle$. Now an isomorphism $\psi$ is well-defined 
	by the diagram:
	\[
		\begin{tikzcd}[column sep=large]
			\langle[a,1],\oplus_p\rangle \arrow[r,"\Phi"]
			& \langle[a',1],\oplus_p'\rangle 
			\\
			\langle[0,1],+_p\rangle \arrow[u,"f"] \arrow[r,dashed,swap,"\psi"]
			& \langle[0,1],+_p\rangle \arrow[u,swap,"f'"] 
		\end{tikzcd}
	\]
	Since $\Phi(a)=a'$, we must have $\psi(0)=0$, and Remark~\ref{E14} implies that $\psi=\Id$. In particular, $\psi$ and with it
	also $\Phi|_{[a,1]}$ is increasing.
\item
	It is easy to deduce that $\Phi$ is increasing. Let $x,y\in[0,1]$ with $x<y$ be given. If there exists 
	$z\in[x,y]\cap E^\oplus$, then \eqref{E48} implies that $\Phi(x)\leq\Phi(z)\leq\Phi(y)$ and since $\Phi$ is injective
	we have $\Phi(x)<\Phi(y)$. If $[x,y]\cap E^\oplus=\emptyset$, then either there exists $(a,b)\in\Delta^\oplus$ such that
	$[x,y]\subseteq(a,b)$, or $[x,y]\subseteq(\max E^\oplus,1]$. In both cases the previous steps yield that 
	$\Phi(x)<\Phi(y)$.
\end{Steps}
\end{proof}

\noindent
Let us explicitly mention that in the course of the proof we established some more detailed statements for parts of the
theorem:
\begin{Itemize}
\item If {\rm(i)} and {\rm(ii)} hold, and $\varphi$ is as in {\rm(i)}, then there exists an increasing isomorphism $\Phi$ of 
	$\langle[0,1],\oplus_p\rangle$ onto $\langle[0,1],\oplus_p'\rangle$ with $\Phi|_{E^\oplus}=\varphi$. 
\item If $E^\oplus=E^{\oplus'}=\{0\}$, there exist exactly two isomorphisms of $\langle[0,1],\oplus_p\rangle$ onto 
	$\langle[0,1],\oplus_p'\rangle$. One of them is increasing and the other is decreasing.
\item Assume that $E^\oplus\neq\{0\}$ and $\Phi\DF\langle[0,1],\oplus_p\rangle\to\langle[0,1],\oplus_p'\rangle$ is an 
	isomorphism. Then $\Phi$ is increasing, maps $E^\oplus$ bijectively onto $E^{\oplus'}$, and 
	$\varphi\DE\Phi|_{E^\oplus}$ satisfies \eqref{E45}. 
\end{Itemize}
Next, we revisit the examples given in Section~\ref{E34}.

\begin{remark}
\label{E41}
	We give for each of Examples~\ref{E23}, \ref{E24}, \ref{E31}, \ref{E35} 
	the respective set $E^\oplus$, structural constants 
	$(\tau^\oplus_{a,b})_{(a,b)\in\Delta^\oplus}$, and the automorphism group of $\langle[0,1],\oplus_p\rangle$ 
	(denote it as $\Aut\langle[0,1],\oplus_p\rangle$). 
	\begin{Itemize}
	\item In Example~\ref{E23}: 
		$E^\oplus=\{0\}$, $\Delta^\oplus=\emptyset$, and $\Aut\langle[0,1],\oplus_p\rangle=\{\Id,1-\Id\}$.
	\item In Example~\ref{E24}: $E^\oplus=[0,1]$, $\Delta^\oplus=\emptyset$, and 
		\\
		$\Aut\langle[0,1],\oplus_p\rangle=\{\Phi\DF[0,1]\to[0,1]\DS \Phi\text{ increasing bijection}\}$.
	\item In Example~\ref{E31}: $E^\oplus=\{0,1\}$, $\Delta^\oplus=\{(0,1)\}$, $\tau^\oplus_{0,1}=1$, and 
		$\Aut\langle[0,1],\oplus_p\rangle=\{\Id\}$.
	\item In Example~\ref{E35}: $E^\oplus=\{0,1\}$, $\Delta^\oplus=\{(0,1)\}$, $\tau^\oplus_{0,1}=\infty$, and 
		\\
		$\Aut\langle[0,1],\oplus_p\rangle=\big\{f\circ M_t\circ f^{-1}\DS t>0\big\}$ 
		where $f$ is the map from \eqref{E44} and $M_t$ is extended
		to $[0,\infty]$ by the usual convention that $t\cdot\infty=\infty$.
	\end{Itemize}
\end{remark}

\noindent
For the purpose of illustration we discuss two examples which are more complex than our building blocks 
Examples~\ref{E23}, \ref{E24}, \ref{E31}, \ref{E35}, and show mixed behaviour.

\begin{example}
\label{E56}
	Let $r\in(0,1)$, and consider the set 
	\[
		E_r\DE\big\{r^n\DSb n\in\bb N\big\}\cup\{0\}.
	\]
	Then $E_r$ is a closed subset of $[0,1]$ with $0\in E_r$. The set $\Delta_r$ corresponding to $E_r$ is 
	\[
		\Delta_r=\big\{(r^{n+1},r^n)\DSb n\in\bb N\big\},
	\]
	and we can identify it with $\bb N$. 
	For each sequence $(\sigma_n)_{n\in\bb N}\in\{1,\infty\}^{\bb N}$ Theorem~\ref{E15} provides us with a monotone almost 
	continuous algebra $\langle[0,1],\oplus_p\rangle$.
	\begin{center}
	\begin{tikzpicture}[x=1.2pt,y=1.2pt,scale=1.2,font=\fontsize{10}{10}]
		\draw[|-|] (0,0) -- (200,0);
		\draw (200,-8) node {$1=r^0$};
		\draw (100,-8) node {$r^1$};
		\draw (50,-8) node {$r^2$};
		\draw (25,-8) node {$r^3$};
		\draw[loosely dotted,thick] (5,-9) to (20,-9);
		\draw (0,-9) node {$0$};

		\draw[fill,color=purple] (200,0) circle [radius=1.2];
		\draw[fill,color=purple] (100,0) circle [radius=1.2];
		\draw[fill,color=purple] (50,0) circle [radius=1.2];
		\draw[fill,color=purple] (25,0) circle [radius=1.2];
		\draw[fill,color=purple] (12.5,0) circle [radius=1.2];
		\draw[fill,color=purple] (0,0) circle [radius=1.2];
		\draw[color=purple,dotted,thick] (2,1) to (11,1);

		\draw(150,3) node[anchor=south] {$\sigma_0$};
		\draw(75,3) node[anchor=south] {$\sigma_1$};
		\draw(37.5,3) node[anchor=south] {$\sigma_2$};
		\draw[loosely dotted,thick] (5,8) to (30,8);
	\end{tikzpicture}
	\end{center}
	The isomorphy classes of those algebras can be determined based on Theorem~\ref{E40}. To this end observe that for each 
	$r,r'\in(0,1)$ there exists exactly one increasing bijection of $E_r$ onto $E_{r'}$. 
	Hence, two algebras constructed from $r,(\sigma_n)_{n\in\bb N}$ and $r',(\sigma_n')_{n\in\bb N}$ are isomorphic if and only 
	if $\sigma_n=\sigma_n'$ for all $n\in\bb N$.
	Two different algebras indeed may have a very different structure. To illustrate this observe that the automorphism group of
	the algebra obtained from $E_r,(\sigma_n)_{n\in\bb N}$ is isomorphic to $(\bb R^+)^{\{n\in\bb N\DS \sigma_n=\infty\}}$.
\end{example}

\noindent
In the above example the set $E$ enforces a certain stability, due to the fact that there are no nontrivial increasing bijections of $E$
onto itself. An example where $E$ admits nontrivial increasing bijections is the following.

\begin{example}
\label{E43}
	Fix a parameter $r\in(0,1)$, and consider the set 
	\[
		E\DE\big\{r^{2^n}\DSb n\in\bb Z\big\}\cup\{0,1\}.
	\]
	Then $E$ is a closed subset of $[0,1]$ with $0\in E$. The set $\Delta$ corresponding to $E$ is 
	\[
		\Delta=\big\{\big(r^{2^n},r^{2^{n-1}}\big)\DSb n\in\bb Z\big\},
	\]
	and we can identify it with $\bb Z$. 
	For each sequence $(\sigma_n)_{n\in\bb Z}\in\{1,\infty\}^{\bb Z}$ Theorem~\ref{E15} provides us with a monotone almost 
	continuous algebra $\langle[0,1],\oplus_p\rangle$.
	\begin{center}
	\begin{tikzpicture}[x=1.2pt,y=1.2pt,scale=1.2,font=\fontsize{10}{10}]
		\draw[|-|] (0,0) -- (200,0);
		\draw (200,-8) node {$1$};
		\draw[loosely dotted,thick] (195,-9) to (182,-9);
		\draw (175,-8) node {$r^{2^{-2}}$};
		\draw (150,-8) node {$r^{2^{-1}}$};
		\draw (100,-8) node {$r^{2^0}$};
		\draw (50,-8) node {$r^{2^1}$};
		\draw (25,-8) node {$r^{2^2}$};
		\draw[loosely dotted,thick] (5,-9) to (18,-9);
		\draw (0,-9) node {$0$};

		\draw[fill,color=purple] (200,0) circle [radius=1.2];
		\draw[color=purple,dotted,thick] (198,1) to (189,1);
		\draw[fill,color=purple] (187.5,0) circle [radius=1.2];
		\draw[fill,color=purple] (175,0) circle [radius=1.2];
		\draw[fill,color=purple] (150,0) circle [radius=1.2];
		\draw[fill,color=purple] (100,0) circle [radius=1.2];
		\draw[fill,color=purple] (50,0) circle [radius=1.2];
		\draw[fill,color=purple] (25,0) circle [radius=1.2];
		\draw[fill,color=purple] (12.5,0) circle [radius=1.2];
		\draw[color=purple,dotted,thick] (2,1) to (11,1);
		\draw[fill,color=purple] (0,0) circle [radius=1.2];

		\draw[loosely dotted,thick] (195,8) to (170,8);
		\draw(162.5,2) node[anchor=south] {$\sigma_{-1}$};
		\draw(125,3) node[anchor=south] {$\sigma_0$};
		\draw(75,3) node[anchor=south] {$\sigma_1$};
		\draw(37.5,3) node[anchor=south] {$\sigma_2$};
		\draw[loosely dotted,thick] (5,8) to (30,8);
	\end{tikzpicture}
	\end{center}
	The isomorphy classes of those algebras can be determined based on Theorem~\ref{E40}. To this end observe that the set of all
	increasing bijections of $E$ onto itself is $\{\varphi_m\DS m\in\bb Z\}$ where 
	\[
		\varphi_m(x)\DE x^{2^m}.
	\]
	Hence, two sequences $(\sigma_n)_{n\in\bb Z}$ and $(\tilde\sigma_n)_{n\in\bb Z}$ give rise to isomorphic algebras, if
	and only if 
	\[
		\exists m\in\bb Z\DQ\forall n\in\bb Z\DP \tilde\sigma_n=\sigma_{n+m}
	\]
	Again different algebras may have very different structure. For example:
	\begin{Itemize}
	\item If $\sigma_n=1$ for all $n\in\bb Z$, then $\Aut\langle[0,1],\oplus_p\rangle\cong\bb Z$.
	\item If $\sigma_0=\infty$ and $\sigma_n=1$ for $n\in\bb Z\setminus\{0\}$, then 
		$\Aut\langle[0,1],\oplus_p\rangle\cong\bb R^+$.
	\end{Itemize}
	If we change the parameter $r$, there exist increasing bijections between the corresponding sets $E$. Hence, isomorphy within
	all these algebras is decided in a similar way. 
\end{example}

\section{Semicontinuity of homomorphic extensions}
\label{E22}

In this section we make the connection with a semicontinuity property that appeared in the work of Mio 
(see the paragraph after the proof of Theorem~5.3 in \cite{mio:2025}).

\begin{definition}
\label{E38}
	Let $\langle[0,1],\oplus_p\rangle$ be a convex algebra. We consider the following property:
	\begin{itemize}
	\item[\LC] For every nonempty set $A$ and every map $\varphi\DF A\to[0,1]$ the extension $\varphi^\#$ of 
		$\varphi$ to a homomorphism of $\mc DA$ into $\langle[0,1],\oplus_p\rangle$ is lower semicontinuous. 
	\end{itemize}
	Here $\mc DA$ is endowed with the topology of pointwise convergence (i.e., the restriction of the product
	topology of $\bb R^A$ to $\mc DA$).
\end{definition}

\noindent
The condition \LC\ is clearly much stronger than {\sf(LC)}: if $x,y\in X$
with $x\leq y$ are given, then {\sf(LC)} states nothing but lower semicontinuity at the point $t\DE 1$ of the extension 
$\varphi^\#\DF\mc D(\{0,1\})\to X$ of the map $\varphi\DF\{0,1\}\to X$ defined as $\varphi(0)\DE x$, $\varphi(1)\DE y$.

Interestingly, in conjunction with {\sf(MO)} and {\sf(UC)} the conditions {\sf(LC)} and \LC\ are equivalent.

\begin{proposition}
\label{E26}
	Every monotone almost continuous convex algebra satisfies \LC.
\end{proposition}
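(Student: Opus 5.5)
The plan is to fix a point $P=(p_a)_{a\in A}\in\mc DA$ and a net (or sequence) $Q_i=(q^{(i)}_a)_{a\in A}$ converging to $P$ pointwise, and to show $\liminf_i\varphi^\#(Q_i)\geq\varphi^\#(P)$. Since $\mc DA$ carries the pointwise topology, $Q_i$ may have mass on elements outside the finite support $S\DE\{a\DS p_a>0\}$ of $P$. So first I remove the dependence on infinitely many coordinates via monotonicity. The $n$-ary operations $\bigoplus$ are built from binary ones by parametric associativity, so Lemma~\ref{E2}(i) gives monotonicity of $\bigoplus_{a}q_ax_a$ in each argument $x_a$. Replacing $\varphi(a)$ by $0$ for every $a\notin S$ therefore yields a lower bound
\[
	\varphi^\#(Q_i)\geq\psi(Q_i)\DE\bigoplus_{a\in A}q^{(i)}_a\,x_a,\qquad x_a\DE\begin{cases}\varphi(a)\CAS a\in S\\ 0\CASO\end{cases}
\]
with equality at $P$. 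It then suffices to show that $\psi$ is continuous at $P$, or at least lower semicontinuous there. Collapsing all coordinates outside $S$ into one coordinate $*$ carrying value $0$ and total weight $1-\sum_{a\in S}q^{(i)}_a\to0$ reduces this to a function on a finite-dimensional simplex.

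Next I choose $a_0\in S$ with $y\DE x_{a_0}=\max_{a\in S}\varphi(a)$. Since $q^{(i)}_{a_0}\to p_{a_0}>0$, eventually $q^{(i)}_{a_0}>0$. Now split into two cases according to whether $y$ lies in $E^\oplus$.

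If $y\in E^\oplus$, then by Lemma~\ref{E12}(ii) $y\Eats x$ for all $x\leq y$. Lemma~\ref{E18} then gives $\psi(Q_i)=y=\psi(P)$ for all large $i$, and we are done.

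If $y\notin E^\oplus$, set $a\DE V_{0,y}$. By Lemma~\ref{E12}(iii) this is $\max(E^\oplus\cap[0,y])$ and $a<y$. Corollary~\ref{E39}, applied with the index $a_0$ (legitimate for large $i$), lets me replace each $x_a$ by $\max\{x_a,a\}\in[a,y]$ without changing $\psi(Q_i)$ or $\psi(P)$. The interval $[a,y]$ lies inside the block $[a,b)$ with $(a,b)\in\Delta^\oplus$, or inside $[\max E^\oplus,1]$. By Theorem~\ref{E28} that block is isomorphic, via an increasing bijection $g$, to an interval of $\bb R$ carrying the operations $+_p$. Hence
\[
	\psi(Q_i)=g\Big(\sum_{a}q^{(i)}_a\,g^{-1}(\max\{x_a,a\})\Big),
\]
where the sum has finitely many distinct values: those for $a\in S$, plus the common value $g^{-1}(a)$ for the collapsed coordinate $*$. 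An increasing bijection between intervals is a homeomorphism, so $g$ and $g^{-1}$ are continuous. The right-hand side is thus continuous in the finitely many weights, and $\psi(Q_i)\to\psi(P)=\varphi^\#(P)$.

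The main point requiring care is the first reduction. I must check that monotonicity of the binary operations really propagates to the formal $n$-ary combinations, including the bookkeeping of how the coordinates outside $S$ are merged into a single coordinate $*$ of value $0$. I must also check that Corollary~\ref{E39} applies uniformly along the tail of the net, which is where $q^{(i)}_{a_0}>0$ is needed. Once these are in place, the case analysis via $E^\oplus$ and the linear structure of the blocks provided by Theorem~\ref{E28} gives the result directly.
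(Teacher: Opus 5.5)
Your proof is correct, and its skeleton is the paper's. You choose the support point of maximal value, note that its weight stays positive near $P$, and apply Corollary~\ref{E39}. You then split according to whether that maximal value lies in $E^\oplus$, and in the non-eater case you linearize and conclude with a monotone minorant. The two points you flag are fine. Monotonicity of the formal combinations in each argument follows from Lemma~\ref{E2}(i) by writing them as iterated binary operations. Merging the zero-valued coordinates into $*$ is legitimate because the homomorphic extension of $a\mapsto x_a$ factors through the push-forward $\mc DA\to\mc D(S\cup\{*\})$.

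The real difference is in how you handle mass escaping from the support $S$.
\begin{itemize}
\item The paper keeps the values $\varphi(a)$ for $a\notin S$ and clips everything into $[V_{0,\varphi(a_1)},\varphi(a_1)]$. The cap from above is needed because those values may exceed $\varphi(a_1)$. It then needs Lemma~\ref{E19}, i.e.\ lower semicontinuity of $(p_a)_{a\in A}\mapsto\sum_a p_at_a$ with $t_a\geq 0$ on all of $\mc DA$, obtained as a supremum over finite subsets.
\item You push those values down to $0$ first. After Corollary~\ref{E39} they all become $V_{0,y}$, so your minorant depends only on the finitely many weights $q_a$, $a\in S$, and is even continuous at $P$.
\end{itemize}
Your route keeps the argument finite-dimensional and makes Lemma~\ref{E19} unnecessary. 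The paper's version instead isolates the infinite-dimensional semicontinuity in a separate lemma. Working with nets or sequences rather than neighbourhoods is immaterial.

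Finally, Theorem~\ref{E28} is more than you need. Since all values lie in $[V_{0,y},y]$, the increasing isomorphism $\Gamma_{V_{0,y},y}$ from Lemma~\ref{E13}(ii), which is what the paper uses, already suffices.
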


\noindent
We isolate the source of semicontinuity that enables the argument.

\begin{lemma}
\label{E19}
	Let $\langle[0,1],\oplus_p\rangle$ be a monotone almost continuous convex algebra, and let $A$ be a
	nonempty set. Let $w\in X\setminus E^\oplus$, and $(\xi_a)_{a\in A}\in[V_{0,w},w]^A$. Then the map 
	\[
		\Lambda^\xi\DF\left\{
		\begin{array}{rcl}
			\mc DA & \to & [V_{0,w},w]
			\\
			(p_a)_{a\in A} & \mapsto & \bigoplus_{a\in A}p_a\xi_a
		\end{array}
		\right.
	\]
	is lower semicontinuous. 
\end{lemma}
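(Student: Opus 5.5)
The plan is to reduce the statement to the standard algebra $\langle[0,1],+_p\rangle$ by means of the path $\Gamma_{a,w}$, where $a\DE V_{0,w}$. Once there, lower semicontinuity becomes an elementary property of nonnegative linear functionals on $\mc DA$.

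\emph{Step 1: the segment $[a,w]$ is a homeomorphic copy of the standard unit interval.} By Lemma~\ref{E12}(iii) we have $a=V_{0,w}=\max(E^\oplus\cap[0,w])$. Since $w\notin E^\oplus$, this gives $a<w$ and $(a,w]\cap E^\oplus=\emptyset$. Lemma~\ref{E13}(ii) then shows that $\Gamma_{a,w}\DF[0,1]\to[a,w]$ is an increasing bijection, and it is a homomorphism from $\langle[0,1],+_p\rangle$ into $\langle[0,1],\oplus_p\rangle$. Consequently $[a,w]$ is a subalgebra. Moreover, $\Gamma_{a,w}$ is an increasing bijection between intervals, hence a homeomorphism, and in particular continuous and increasing.

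\emph{Step 2: factor $\Lambda^\xi$ through the standard algebra.} For each $c\in A$ put $s_c\DE\Gamma_{a,w}^{-1}(\xi_c)\in[0,1]$. Formal convex combinations are preserved by homomorphisms. In $\langle[0,1],+_p\rangle$ they equal actual linear combinations (Example~\ref{E25} and the remark following it). Hence
\[
	\Lambda^\xi\big((p_c)_{c\in A}\big)=\Gamma_{a,w}\Big(\sum_{c\in A}p_cs_c\Big),
\]
so $\Lambda^\xi=\Gamma_{a,w}\circ L$ with $L\big((p_c)_{c\in A}\big)\DE\sum_{c\in A}p_cs_c$. This sum is finite for each element of $\mc DA$.

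\emph{Step 3: $L$ is lower semicontinuous, and this transfers to $\Lambda^\xi$.} I expect this to be the only point requiring care. When $A$ is infinite, $L$ is in general not continuous for the topology of pointwise convergence, because mass can escape to coordinates carrying large $s_c$. Lower semicontinuity survives because $s_c\geq 0$. For each finite set $F\subseteq A$ the map $L_F\big((p_c)_{c\in A}\big)\DE\sum_{c\in F}p_cs_c$ depends on finitely many coordinates only, so it is continuous on $\mc DA$. Nonnegativity of the terms gives $L=\sup_{F\text{ finite}}L_F$ pointwise, and a pointwise supremum of continuous functions is lower semicontinuous.

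Finally, $\Gamma_{a,w}$ is continuous and nondecreasing. For a net $p^{(\nu)}\to p$ in $\mc DA$ we have $\liminf L(p^{(\nu)})\geq L(p)$, and therefore
\[
	\liminf\Gamma_{a,w}\big(L(p^{(\nu)})\big)\geq\Gamma_{a,w}\big(L(p)\big).
\]
Hence $\Lambda^\xi$ is lower semicontinuous with values in $[V_{0,w},w]$, which proves Lemma~\ref{E19}.
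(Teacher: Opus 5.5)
Your proof is correct and follows essentially the same route as the paper. You transport the problem via the increasing isomorphism $\Gamma_{V_{0,w},w}$ to $\langle[0,1],+_p\rangle$, write $\Lambda^\xi=\Gamma_{V_{0,w},w}\circ L$, and obtain lower semicontinuity of $L$ as a supremum of continuous finite partial sums. The differences are minor: you justify explicitly (via Lemma~\ref{E12}(iii) and Lemma~\ref{E13}(ii)) why $\Gamma_{V_{0,w},w}$ is an increasing bijection onto $[V_{0,w},w]$, and you pass lower semicontinuity through $\Gamma$ using nets and its continuity, whereas the paper argues directly with a threshold $\gamma$ and $\Gamma^{-1}(\gamma)$.
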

\begin{proof}
	We use the increasing isomorphism $\Gamma\DE\Gamma_{V_{0,w},w}$ to transport the problem from 
	$\langle[V_{0,w},w],\oplus_p\rangle$ to $\langle[0,1],+_p\rangle$: set $t_a\DE\Gamma^{-1}(\xi_a)$, then we have 
	\[
		\forall(p_a)_{a\in A}\in\mc DA\DP
		\Lambda^\xi\big((p_a)_{a\in A}\big)=\Gamma\Big(\sum_{a\in A}p_at_a\Big).
	\]
	Since 
	\[
		\sum_{a\in A}p_at_a=\sup_{\substack{A'\subseteq A\\ A'\text{ finite}}}\sum_{a\in A'}p_at_a, 
	\]
	and each sum taken over a fixed finite subset $A'$ is a continuous function of $(p_a)_{a\in A}$, the function 
	mapping $(p_a)_{a\in A}$ to $\sum_{a\in A}p_at_a$ is lower semicontinuous. 

	Let $(q_a)_{a\in A}\in\mc DA$, $\gamma\in\bb R$, and assume that $\gamma<\Lambda^\xi((q_a)_{a\in A})$. We have to find a
	neighbourhood $U$ of $(q_a)_{a\in A}$ in $\mc DA$ such that this inequality holds throughout $U$. 
	If $\gamma<V_{0,w}$, we can choose $U\DE\mc DA$. Assume that $\gamma\geq V_{0,w}$. Since $\gamma$ is bounded from
	above by a value of $\Lambda^\xi$, it certainly does not exceed $w$. Hence, the value $\Gamma^{-1}(\gamma)$ is defined.
	Monotonicity of $\Gamma$ yields 
	\[
		\Gamma^{-1}(\gamma)<\sum_{a\in A}q_at_a,
	\]
	and we find a neighbourhood $U$ of $(q_a)_{a\in A}$ such that this inequality prevails throughout $U$. Applying
	$\Gamma$ yields the required assertion. 
\end{proof}

\begin{proof}(of Proposition~\ref{E26})
	Assume we have a nonempty set $A$, a map $\varphi\DF A\to[0,1]$, a point $(q_a)_{a\in A}$, and a number 
	$\gamma\in\bb R$ with $\gamma<\varphi^\#((q_a)_{a\in A})$. 

	Write $\{a\in A\DS q_a>0\}=\{a_1,\ldots,a_m\}$, where the enumeration is chosen such that 
	\[
		\varphi(a_1)=\max\big\{\varphi(a_j)\DS j\in\{1,\ldots,m\}\big\},
	\]
	and set 
	\[
		O\DE\big\{(p_a)_{a\in A}\in\mc DA\DS p_{a_1}>0\big\}.
	\]
	Then $O$ is an open subset of $\mc DA$ and $(q_a)_{a\in A}\in O$.

	By Corollary~\ref{E39} we have 
	\[
		\forall(p_a)_{a\in A}\in O\DP
		\varphi^\#\big((p_a)_{a\in A}\big)=\bigoplus_{a\in A}p_a\max\{\varphi(a),V_{0,\varphi(a_1)}\}.
	\]
	The case that $\varphi(a_1)\in E^\oplus$ is easily settled. In this case we have $V_{0,\varphi(a_1)}=\varphi(a_1)$, and
	hence can estimate for each $(p_a)_{a\in A}\in O$ 
	\begin{align*}
		\gamma< &\, 
		\varphi^\#\big((q_a)_{a\in A}\big)
		=\bigoplus_{j=1}^mq_{a_j}\underbrace{\max\{\varphi(a_j),\varphi(a_1)\}}_{=\varphi(a_1)}=\varphi(a_1)
		=\bigoplus_{a\in A}p_a\varphi(a_1)
		\\
		\leq &\, \bigoplus_{a\in A}p_a\max\{\varphi(a),\varphi(a_1)\}
		=\varphi^\#\big((p_a)_{a\in A}\big).
	\end{align*}
	Consider now the case that $\varphi(a_1)\notin E^\oplus$. The idea is to push $\varphi(a)$ into the interval 
	$[V_{0,\varphi(a_1)},\varphi(a_1)]$: set 
	\[
		\xi_a\DE\min\big\{\max\{\varphi(a),V_{0,\varphi(a_1)}\},\varphi(a_1)\big\}\qquad\text{for }a\in A.
	\]
	The map $\Lambda^\xi$ from Lemma~\ref{E19} satisfies
	\begin{align*}
		\Lambda^\xi\big((q_a)_{a\in A}\big)= &\, \bigoplus_{j=1}^mq_{a_j}\xi_{a_j}
		=\bigoplus_{j=1}^mq_{a_j}\min\big\{\max\{\varphi(a_j),V_{0,\varphi(a_1)}\},\varphi(a_1)\big\}
		\\
		= &\, \bigoplus_{j=1}^mq_{a_j}\max\{\varphi(a_j),V_{0,\varphi(a_1)}\}
		=\varphi^\#\big((q_a)_{a\in A}\big)>\gamma.
	\end{align*}
	Lemma~\ref{E19} tells us that there exists a neighbourhood $U$ of $(q_a)_{a\in A}$ such that 
	\[
		\forall (p_a)_{a\in A}\in U\DP \Lambda^\xi\big((p_a)_{a\in A}\big)>\gamma.
	\]
	The set $U\cap O$ is again a neighbourhood of $(q_a)_{a\in A}$, and for $(p_a)_{a\in A}\in U\cap O$ we can estimate 
	\begin{align*}
		\varphi^\#\big((p_a)_{a\in A}\big)= &\, 
		\bigoplus_{a\in A}p_a\max\{\varphi(a),V_{0,\varphi(a_1)}\}
		\\
		\geq &\, \bigoplus_{a\in A}p_a\min\big\{\max\{\varphi(a),V_{0,\varphi(a_1)}\},\varphi(a_1)\}
		=\Lambda^\xi\big((p_a)_{a\in A}\big)>\gamma.
		\\
	\end{align*}
\end{proof}

\section{Conclusions}

We gave an explicit construction of all convex algebras on $[0,1]$ with monotone and semicontinuous operations with respect to
the standard order and topology on the unit interval. The theory of generalized ICA for these convex algebras is compact. It would be
interesting to investigate if this or similar classes play a role in the proof of compactness of other quantitative theories.  

The set $[0,1]$ is the carrier of the free convex algebra $\mc D2$ with two generators. Monotone almost continuous convex algebras
with carrier set $\mc Dn$ where $n>2$ (or even more general carriers) could be defined in the very same way; simply use the product order 
and product topology. It may be subject of future work to understand the structure of such algebras. However, we expect the situation
to be much more complicated. A severe obstacle may arise from the fact that the congruence lattice of $\mc Dn$ is much more complex than
the one of $\mc D2$, and it is exactly this simplicity of $\mc D2$ that enables the present description via the set $E^\oplus$ of eaters.


\bibliography{ca123}

\end{document}